\pgfplotsset{compat=1.18}
\renewcommand*{\backref}[1]{}
\renewcommand*{\backrefalt}[4]{%
  \ifcase #1%
  \or [Page~#2.]%
  \else [Pages~#2.]%
  \fi%
}
\theoremstyle{plain}
\newtheorem{lemma}{Lemma}
\theoremstyle{definition}
\newcommand{\g}{\mathfrak{g}}
\renewcommand{\a}{\mathfrak{a}}
\renewcommand{\b}{\mathfrak{b}}
\renewcommand{\d}{\partial}
\renewcommand{\r}{\mathfrak{r}}
\newcommand{\so}{\mathfrak{so}}
\renewcommand{\t}{\mathfrak{t}}
\newcommand{\fm}{\mathfrak{m}}
\newcommand{\be}{\boldsymbol{e}}
\newcommand{\bu}{\boldsymbol{u}}
\newcommand{\x}{\boldsymbol{x}}
\newcommand{\bj}{\boldsymbol{j}}
\newcommand{\p}{\boldsymbol{p}}
\newcommand{\bk}{\boldsymbol{k}}
\newcommand{\ba}{\boldsymbol{a}}
\newcommand{\bzero}{\boldsymbol{0}}
\newcommand{\bbeta}{\boldsymbol{\beta}}
\newcommand{\bpi}{\boldsymbol{\pi}}
\newcommand{\bdelta}{\boldsymbol{\delta}}
\newcommand{\bd}{\boldsymbol{d}}
\newcommand{\bP}{\boldsymbol{P}}
\newcommand{\bD}{\boldsymbol{D}}
\newcommand{\sg}{\boldsymbol{\mathsf{g}}}
\newcommand{\sfb}{\boldsymbol{\mathsf{b}}}
\newcommand{\sfc}{\boldsymbol{\mathsf{c}}}
\newcommand{\sA}{\boldsymbol{\mathsf{A}}}
\newcommand{\sM}{\boldsymbol{\mathsf{M}}}
\newcommand{\scrO}{\mathscr{O}}
\newcommand{\eH}{\mathcal{H}}
\newcommand{\eO}{\mathcal{O}}
\newcommand{\eB}{\mathcal{B}}
\newcommand{\ann}{\operatorname{ann}}
\renewcommand{\Re}{\operatorname{Re}}
\renewcommand{\Im}{\operatorname{Im}}
\newcommand{\ad}{\operatorname{ad}}
\newcommand{\Ad}{\operatorname{Ad}}
\newcommand{\Tr}{\operatorname{Tr}}
\newcommand{\RR}{\mathbb{R}}
\newcommand{\QQ}{\mathbb{Q}}
\newcommand{\CC}{\mathbb{C}}
\newcommand{\GL}{\operatorname{GL}}
\newcommand{\SO}{\operatorname{SO}}
\newcommand{\SU}{\operatorname{SU}}
\newcommand{\U}{\operatorname{U}}
\newcommand{\Spin}{\operatorname{Spin}}
\definecolor{dkgr}{rgb}{0,0.6,0}
\definecolor{gris}{rgb}{0.5,0.5,0.5}
\providecommand*{\pd}{\partial}
\renewcommand*{\pd}{\partial}
\providecommand*{\db}{{\bm{d}}}
\renewcommand*{\db}{{\bm{d}}}
\providecommand*{\zb}{{\bm{z}}}
\renewcommand*{\zb}{{\bm{z}}}
\providecommand*{\pib}{{\bm{\pi}}}
\renewcommand*{\pib}{{\bm{\pi}}}
\newcommand{\bv}{\boldsymbol{v}}
\title{\boldmath Planons and their Carroll--Galilei symmetries}
 \author[a,1]{José Figueroa-O'Farrill,\note{ORCID: \href{https://orcid.org/0000-0002-9308-9360}{0000-0002-9308-9360}}}
 \author[b,c,2]{Simon Pekar,\note{ORCID: \href{https://orcid.org/0000-0002-0765-8986}{0000-0002-0765-8986}}}
 \author[d,e,g,3]{Alfredo Pérez,\note{ORCID: \href{https://orcid.org/0000-0003-0989-9959}{0000-0003-0989-9959}}}
 \author[f,g,4]{and Stefan Prohazka\note{ORCID: \href{https://orcid.org/0000-0002-3925-3983}{0000-0002-3925-3983}}}
\affiliation[a]{Maxwell Institute and School of Mathematics, The University of Edinburgh, James Clerk Maxwell Building, Peter Guthrie Tait Road,
  Edinburgh EH9 3FD, Scotland, United Kingdom}
\affiliation[b]{International School for Advanced Studies (SISSA), Via Bonomea 265, 34136 Trieste, Italy}
\affiliation[c]{Istituto Nazionale di Fisica Nucleare, Sezione di Trieste, Via Valerio 2, 34127 Trieste, Italy}
\affiliation[d]{Centro de Estudios Científicos (CECs), Avenida Arturo Prat 514, Valdivia, Chile}
\affiliation[e]{Facultad de Ingeniería, Universidad San Sebastián, sede Valdivia, General Lagos 1163, Valdivia 5110693, Chile}
\affiliation[f]{University of Vienna, Faculty of Physics, Mathematical Physics, Boltzmanngasse 5, 1090, Vienna, Austria}
\affiliation[g]{Erwin Schr\"odinger Int.\ Institute for Mathematics and Physics, University of Vienna}
 \emailAdd{j.m.figueroa@ed.ac.uk}
 \emailAdd{spekar@sissa.it}
 \emailAdd{alfredo.perez@uss.cl}
 \emailAdd{stefan.prohazka@univie.ac.at}
 \abstract{We study the dynamics of planons, particles whose mobility
   is restricted to a plane, through the classification of coadjoint
   orbits and unitary irreducible representations of the centrally
   extended planon group.  Planons are closely related to
   Galilei/Bargmann symmetries and, remarkably, the often-ignored
   massless coadjoint orbits of the Galilei group play a central rôle
   in their description.  We thereby provide a nontrivial physical
   interpretation of these orbits.  We further construct classical and
   quantum dipoles as bound states of monopoles, where the restricted
   planar motion arises from a novel mixed Carroll--Galilei symmetry.
   We also argue that the simplest and already experimentally realised
   systems with Carroll symmetry are in crystals.}
\begin{document}
\maketitle

\section{Introduction}
\label{sec:introduction}

The Galilei algebra and its centrally extended version, the Bargmann
algebra, play a central role in physics. They describe the
kinematics of particles moving at velocities much smaller than the
speed of light. From the perspective of group theory, such
non-relativistic particles are described by coadjoint orbits or, in
the quantum case, by unitary irreducible representations of the
Bargmann group, where the mass is non-vanishing
\cite{Bargmann:1954gh}. Surprisingly, the Galilei/Bargmann group
admits various other coadjoint orbits and unitary irreducible
representations (see, e.g., \cite{Figueroa-OFarrill:2024ocf} for a complete
classification), most of which lack a clear physical interpretation,
even though they are mathematically well-defined (see, e.g.,
\cite{Duval:2005ry} for a particular application to geometrical
optics).

In this work, we show that the coadjoint orbits and unitary
irreducible representations of the Bargmann group, which are often
ignored in applications in nonrelativistic physics, play a key role in
the description of certain systems with restricted mobility. These so-called planons belong to the class of fracton
models~\cite{Nandkishore:2018sel,Pretko:2020cko,Grosvenor:2021hkn,Gromov:2022cxa}.
Furthermore, systems with these symmetries are, via duality to
elasticity in $2+1$ dimensions, experimentally
realised~\cite{Pretko:2017kvd}.\footnote{The relation is between
  fractons and crystalline defects. The immovable fractons (or
  monopoles) correspond to disclinations, whereas the partially
  movable dipoles correspond to dislocations. The allowed motion which
  is restricted to be orthogonal to the dipole moment is then related
  to the glide constraint in crystals which allows dislocations to
  move only in the direction of the Burgers vector.}

Systems with immobility or restricted mobility have shown up at
various corners of physics with possible applications ranging from
quantum
memory~\cite{Nandkishore:2018sel,Pretko:2020cko,Grosvenor:2021hkn,Gromov:2022cxa},
to (conformal) carrollian symmetries and their relation to
asymptotically flat
spacetimes~\cite{Duval:2014uva,Donnay:2022aba,Bagchi:2022emh,Figueroa-OFarrill:2021sxz,Have:2024dff}.
Due to the restricted mobility, standard quantum field theories results
often do not apply. This therefore presents novel challenges and
consequently new opportunities to sharpen our understanding of quantum
field theories~\cite{Brauner:2022rvf,Cordova:2022ruw}.  One intriguing
question arises: \textit{What defines these novel particles with
  restricted mobility, and how can they be coupled to other systems?}

In this work we answer this question for particles with conserved
electric charge, but what might be less familiar, also conserved
dipole moment and trace (of quadrupole) charge
\begin{align}
  \label{eq:charge-def}
  Q&= \int d^{3}x \rho  & \bm{D}&=\int d^{3}x \rho \x  &  Z&=\frac{1}{2}\int d^{3}x\,\rho\left\Vert \boldsymbol{x}\right\Vert ^{2} .
\end{align}
These charges can be derived as a consequence of the following current
conservation and suitable fall-offs for the fields towards spatial
infinity
\begin{align}
 \dot \rho + \pd_{i}\pd_{j} J^{ij} &= 0  & \delta_{ij}J^{ij} = 0 \, .
\end{align}

To build some intuition let us first consider a charged isolated
monopole with trajectory $\bm{z}(t)$ and charge density
$\rho=q\delta(\bm{x}-\bm{z}(t))$. Since $q$ is constant, the charge
$Q=q$ is trivially conserved, however the conservation of the dipole
moment, $\dot{\bm{D}} = q \dot{\bm{z}}(t)=\bm{0}$, puts
restrictions on the mobility. Hence, isolated monopoles are immobile
which is the characteristic feature of
fractons~\cite{Nandkishore:2018sel,Pretko:2020cko,Grosvenor:2021hkn}.

What about dipoles? The charge density of a dipole with a constant
dipole moment $\bm{d}$ is given by
$\rho=-d^{i}\pd_{i}\delta(\boldsymbol{x}-\boldsymbol{z}(t))$ and leads
to the following charges
\begin{align}
  \label{eq:charge-dip}
  Q&= 0  & \bm{D}&=\bm{d}  &  Z&=\bd\cdot\boldsymbol{z}(t) \, .
\end{align}
As expected, the charge vanishes and the dipole moment is conserved,
but the trace charge puts restrictions on the mobility
\begin{equation}
  \label{eq:mob_rest_dipoles}
  \dot Z = \bd\cdot\dot{\boldsymbol{z}}(t) =0 \, .
\end{equation}
From this condition, it is clear that the conservation of the trace
charge implies that the velocity along the direction of the dipole
moment must vanish. Thus the dipole can move only in the plane
orthogonal to its dipole moment, maintaining its orientation, see
Figure~\ref{fig:planon}. This distinctive characteristic feature
justifies why they are referred to as ``planons.'' As we will explain,
planons have an interesting mixed Carroll--Galilei
symmetry. Intuitively the particle Lagrangian is invariant under
galilean boosts in the transverse direction, where it can move freely,
and under carrollian boosts in the direction in which its motion is
restricted.

\begin{figure}[h]
  \centering
  
\begin{tikzpicture}[scale=3.2]

\draw[->] (0,0,0) -- (0,0,1.5) node[anchor=south east] {$x_1$};
\draw[->] (0,0,0) -- (1.8,0,0) node[anchor=north east] {$x_2$};
\draw[->] (0,0,0) -- (0,1.8,0) node[anchor=north west] {$x_3$};

\fill[blue!20,opacity=0.5] (1.5,0,0) -- (0,1.5,0) -- (0,1.5,1.5) -- (1.5,0,1.5) -- cycle;

\draw[blue!50] (1.5,0,0) -- (0,1.5,0);
\draw[blue!50] (0,1.5,0) -- (0,1.5,1.5);
\draw[blue!50] (0,1.5,1.5) -- (1.5,0,1.5);
\draw[blue!50] (1.5,0,1.5) -- (1.5,0,0);

\draw[->,gray!80] (0,0,0) -- (0.5,0.8,0.5) node[anchor=east] {$\zb(t_1)$};
\draw[->,gray!80] (0,0,0) -- (1,0.3,0.5) node[anchor=south east] {$\zb(t_2)$};

\draw[-Stealth,red,thick,line cap=round] (0.5,0.8,0.5) -- (0.5,0.8,0.1) node[anchor=north] {$\db$};

\draw[-Stealth,red,thick,line cap=round] (1,0.3,0.5) -- (1,0.3,0.1) node[anchor=north] {$\db$};

\end{tikzpicture}
\caption{Due to the conservation of the trace charge
  $\dot Z = \bd\cdot\dot{\boldsymbol{z}}\left(t\right)=0$ planons with
  dipole moment $\db$ are allowed to move but are restricted to the
  plane transversal to the dipole moment.}
  \label{fig:planon}
\end{figure}
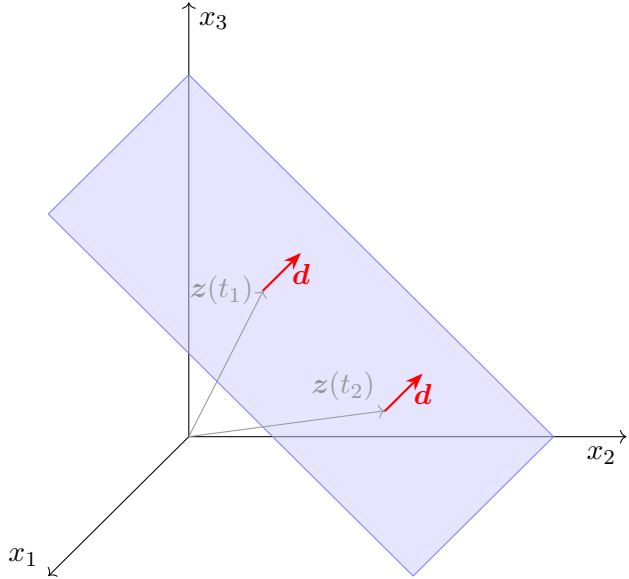

The whole system also possesses rotational, time and spatial translational
symmetries, whose generators we denote by $L_{ab}$, $H$, $P_{a}$. The
planon algebra, which is an instance of the multipole
algebra~\cite{Gromov:2018nbv}, is then spanned by
$\{L_{ab}, P_a, Q, D_a, Z := \delta^{ab}Q_{ab}, H\}$.  There are no
boost generators, since like for fractons, the geometry is
aristotelian~\cite{Bidussi:2021nmp,Jain:2021ibh}. The non-vanishing
brackets are 
\begin{subequations}
  \label{eq:planonalg}
\begin{align}
[L_{ab},L_{cd}] &= \delta_{bc}L_{ad} - \delta_{ac}L_{bd} - \delta_{bd}L_{ac} + \delta_{ad}L_{bc} \\
[L_{ab},P_c] &= \delta_{bc} P_a - \delta_{ac} P_b \\
[L_{ab},D_c] &= \delta_{bc} D_a - \delta_{ac} D_b \\
[P_a,Z] &= D_a \\
[P_a,D_b] &= \delta_{ab} Q \,,
\end{align}
\end{subequations}
where $H$ is central. As already emphasised by
Gromov~\cite{Gromov:2018nbv}, this algebra is closely related to the
centrally extended Galilei algebra, i.e., the Bargmann algebra.  We
have summarised their correspondence in
Appendix~\ref{sec:bargmann-to-planon-summary} and
Table~\ref{tab:planon-barg}, and at various instances we will use or
contrast our results to this case~\cite{Figueroa-OFarrill:2024ocf}.

The classical elementary particles of a Lie group $G$ are its
homogeneous symplectic manifolds. Roughly speaking, these are the
coadjoint orbits of one-dimensional central extensions of $G$.  In
Section~\ref{sec:centr-extens-plan} we exhibit the unique (nontrivial
and up to equivalence) one-dimensional central extension of the planon
group.  At the Lie algebra level, this consists in adding a new
generator $W$ to the above basis for the planon algebra and modifying
the Lie brackets in equation~\eqref{eq:planonalg} by the addition of
\begin{equation}
  \label{eq:into-new-central-bracket}
  [H,Z] = W \, .
\end{equation}
In Section~\ref{sec:coadjoint-orbits} we study the coadjoint
representation of the centrally extended planon group or,
equivalently, the action of the centrally extended planon group on the
conserved charges.  We then restrict our attention to $3$ spatial
dimensions and classify the coadjoint orbits in
Section~\ref{sec:coadjoint-orbits-n=3}.  These are summarised in
Table~\ref{tab:planon-coad-orbs}.

In Section~\ref{sec:actions-mobility} we discuss the mobility of
the classical particles.  We start in Section~\ref{sec:MC-one-form}
with a review of how to associate a particle trajectory with a
coadjoint orbit.  We use the data defining the coadjoint orbit to
define a variational problem for curves in the Lie group: curves which
we can then map to particle trajectories on a chosen homogeneous
spacetime of the Lie group.  This is done for each of the coadjoint
orbits in Section~\ref{sec:particle-dynamics} and the results are
summarised in Table~\ref{tab:stabilisers}.  We then concentrate on
those coadjoint orbits of the unextended planon group (the other
orbits lead to trajectories where energy is unbounded from below) and
study their dynamical systems in more detail in
Section~\ref{sec:particle-dynamics-orbitwise}.  We indeed find a
monopole (and its spinning counterpart) from first principles
\begin{equation}
  \label{eq:monopole_act}
  S_{\mathrm{mono}}\left[\boldsymbol{x},\boldsymbol{\pi}\right]=\int dt  [\boldsymbol{\pi}\cdot\dot{\boldsymbol{x}}] \, .
\end{equation}
As expected, the equation of motion restricts the mobility to
$\dot{\bm{x}} =\bm{0}$. It is interesting to note that this particle
is related to the well-known massive Galilei particle, where the
immobility translates into the constancy of velocity (see
Section~\ref{sec:orbit-6-monopole} for details).  We also find
elementary dipoles with movement restricted to a plane. They are
related to the often ignored massless Galilei orbits.

In Section~\ref{sec:composite-dipoles} we construct composite dipoles
from the bound state of two monopoles, leading to the action
\begin{align}
  \label{eq:Spl}
  S_{\text{dip}}[\bm{z},\pib,\boldsymbol{\sigma},\boldsymbol{d}] &=\int dt 
  \left[
    \pi_{i}\dot{z}^{i}+\sigma_{i}\dot{d}^{i}-H
  \right] &
  H &=\frac{1}{2m}P_{ij}\pi^{i}\pi^{j}
\end{align}
where $P_{ij}=\delta_{ij}-\hat d_{i}\hat d_{j}$ is the projector
transverse to the dipole moment, and $\hat{\bm{d}}$ is the unit vector
in the direction of the dipole moment.  This system has the expected
planon symmetries and (im)mobilty, which can be seen as a consequence
of a novel mixed Carroll--Galilei symmetry.  In the transverse
direction, the particle admits Galilei boosts and can therefore freely
move. In the longitudinal direction, it has a Carroll boost symmetry
and is therefore stuck, since isolated Carroll
particles~\cite{Duval:2014uoa,Bergshoeff:2014jla} like
fractons~\cite{Figueroa-OFarrill:2023vbj} cannot move.

Following~\cite{Perez:2023uwt}, this particle action can be
consistently coupled to the traceless scalar gauge
theory~\cite{Pretko:2016lgv}
\begin{equation}
  S_{\text{int}}[\bm{z},\bm{d}]=\int dt d^{3}x\,\left[\phi\rho-A_{ij}J^{ij}\right]
  =\int dt\,\left[\partial_{i}\phi(t,\boldsymbol{z}(t))d^{i}+A_{ij}(t,\boldsymbol{z}(t))d^{\left\langle i\right.}\dot{z}^{\left.j\right\rangle }\right]  \, .
\end{equation}
where $\phi$ represents the scalar potential, while $A_{ij}$ is the
symmetric and traceless tensor potential. Here and in the following, angle brackets will implement the symmetric and traceless projection. The variation of the coupled
system leads to the generalised Lorentz force law presented by
Pretko~\cite{Pretko:2016lgv}, but now follows from an action principle
including a planon particle.

In Section~\ref{sec:first-quantisation}, we discuss the quantisation of the composite
model and obtain a gaussian Schr\"odinger-like theory described by the field
$\psi(t,\bm{x})$, with action
\begin{align}
  S[\psi,\psi^{*}] &=\int dt d^{3}x
                     \left[
                     i\psi^{*} \mathcal D_{t}\psi - \frac{1}{2m} P^{ij} \mathcal{D}_{i}\psi (\mathcal{D}_{j}\psi)^{*}
                     \right],
\end{align}
where we have introduced
$\mathcal{D}_t= \partial_t - i d^i \partial_i\phi$ and
$\mathcal{D}_{i}=i\pd_{i}+A_{ik}d^{k}$ with $\bm{d}$ being a constant
parameter. This model already appeared 
in~\cite{Kumar_2019,Pretko:2019omh} (see
also~\cite[II.B.3]{Gromov:2022cxa}) in the context of
fracton/elasticity duality. The uncoupled theory once again exhibits the Carroll--Galilei symmetry  previously discussed (see
Section~\ref{sec:summ-plan-field}).

In Section~\ref{sec:quant-plan-part} we study the planon elementary
quantum systems, i.e., the unitary irreducible representations (UIRs)
of the centrally extended planon group.  We use the fact that the
centrally extended planon group can be written as a semidirect product
$B \ltimes A$ with $B$ isomorphic to the Bargmann group and $A$ a
two-dimensional abelian group, in order to apply the Mackey method to
construct UIRs. As shown in
Appendix~\ref{sec:regul-centr-extend}, the semidirect product
is regular and hence Mackey theory guarantees that all UIRs can be
obtained as induced representations.  There are two kinds of UIRs:
those which are induced from UIRs of the Bargmann group and those
which are induced from UIRs of a Carroll subgroup of the Bargmann
group.  This allows us to borrow (after some translation) the results
of \cite{Figueroa-OFarrill:2024ocf} on the UIRs of the Bargmann group
and of \cite{Figueroa-OFarrill:2023qty} on the UIRs of the Carroll
group in order to determine the UIRs of the centrally extended planon
group.  Rawnsley's theorem \cite{MR387499} guarantees that these UIRs
are obtained by geometric quantisation of coadjoint orbits and we
display this correspondence in Table~\ref{tab:orbits-UIRs}.

We close with a discussion in Section~\ref{sec:discussion}, where we
mention various generalisations and also argue that defects in
crystals at low temperatures are carrollian.

The paper contains three appendices.  In
Appendix~\ref{sec:bargmann-to-planon-summary} we summarise the results
in \cite{Figueroa-OFarrill:2024ocf} about the Bargmann group in planon
language, setting up a useful dictionary between the two.
Appendix~\ref{app:extensions}, contains a technical result about (not
necessarily universal) central extensions.  Finally,
Appendix~\ref{sec:regul-centr-extend} checks that the semidirect
product description of the centrally extended planon group is regular,
a technical result needed to guarantee that the Mackey method is
exhaustive.

\section{Classical planon particles}
\label{sec:planon-particles}

We follow the philosophy of Souriau \cite{MR0260238}, in which
classical planon particles are to be identified with homogeneous
symplectic manifolds of the planon group.  Unlike the case of
fractons, which we treated in this language in
\cite{Figueroa-OFarrill:2023vbj,Figueroa-OFarrill:2023qty}, the planon
group has nontrivial symplectic cohomology and that means that its
homogeneous symplectic manifolds are not necessarily coadjoint orbits
of the planon group, but of a one-dimensional central extension, which
we will describe in Section~\ref{sec:centr-extens-plan}.  After
describing this group we discuss its coadjoint orbits in general
dimension in Section~\ref{sec:coadjoint-orbits}, before specialising
to spatial dimension $n=3$ in Section~\ref{sec:coadjoint-orbits-n=3}.
We list all the coadjoint orbits explicitly as zero loci of equations
and give a representative for each orbit.  These are listed, along
with the stabiliser of the chosen representative, in
Table~\ref{tab:planon-coad-orbs}.  This will then be the departing
point to analyse the elementary particle dynamics and their mobility
in Section~\ref{sec:actions-mobility}.

\subsection{The central extension of the planon group}
\label{sec:centr-extens-plan}

Let $G_{\text{pla}} \cong B \times \RR$ denote the (connected,
simply-connected) planon group, which is isomorphic to the product of
the (connected, simply-connected) Bargmann group $B$ and the additive
group of reals. The Lie algebra $\g_{\text{pla}}$ is the span of
$\left<L_{ab}, P_a, D_a, Z, Q, H\right>$ where the Bargmann algebra
$\b$ in planon language is the span of
$\b = \left< L_{ab}, D_a, P_a, Z, Q\right>$ with $H$ spanning the Lie
algebra of $\RR$. In the language of our recent summary
\cite{Figueroa-OFarrill:2024ocf} of Galilei particles, what we call
$P_a, D_a, Z, Q$ here are there called $B_a, P_a, H, M$, respectively.
We will be reusing many of the results in
\cite{Figueroa-OFarrill:2024ocf}, but translating the generators into
planon language. The precise dictionary is the subject of
Appendix~\ref{sec:bargmann-to-planon-summary}. The Lie brackets of
$\g_{\text{pla}}$ are given by the standard kinematical Lie brackets,
which say that $\r = \left<L_{ab}\right> \cong \so(n)$ is the
rotational subalgebra, $D_a,P_a$ are vectors and $Z,Q,H$ are scalars,
and in addition the following nonzero brackets:\footnote{To highlight
  the translation,
  these are the Bargmann brackets $[B_a, H] = P_a$ and $[B_a, P_b] =
  \delta_{ab} M$.}
\begin{equation}
  \label{eq:planon-brackets-extra}
  [P_a, Z] = D_a  \qquad\text{and}\qquad [P_a, D_b] = \delta_{ab} Q.
\end{equation}
The canonical dual basis for $\g^*_{\text{pla}}$ is $\left<
  \lambda^{ab}, \pi^a, \delta^a, \zeta, \theta, \eta\right>$.
Central extensions are classified up to
equivalence\footnote{Equivalence of central extensions refines the
  notion of Lie algebra isomorphism: two central extensions might be
  isomorphic as Lie algebras and yet inequivalent as central
  extensions, since equivalence requires that the isomorphism be the
  identity both on the Lie algebra being extended and on the central
  ideal.} by the Chevalley--Eilenberg group $H^2(\g_{\text{pla}};\RR)$.
Provided that $n\geq 3$, which we will assume and in fact later
specialise to $n=3$, we can compute this group from the $\r$-basic
subcomplex $C^\bullet(\g_{\text{pla}},\r;\RR)$ consisting of
$\r$-invariant cochains which have no legs along $\r$.  This complex
is very easy to describe:
\begin{equation}
  \label{eq:relative-complex}
  \begin{split}
    C^1(\g_{\text{pla}},\r;\RR) &= \left<\eta, \theta, \zeta\right>\\
    C^2(\g_{\text{pla}},\r;\RR)&= \left<\delta_{ab} \pi^a \wedge
      \delta^b, \eta \wedge \theta, \eta \wedge \zeta,  \theta \wedge \zeta\right>.
  \end{split}
\end{equation}
The Chevalley--Eilenberg differential $\d : C^1(\g_{\text{pla}},\r;\RR)
\to C^2(\g_{\text{pla}},\r;\RR)$ is given by
\begin{equation}
  \label{eq:CE-d}
  \d \eta = \d \zeta = 0 \qquad\text{and}\qquad \d \theta = \delta_{ab} \pi^a \wedge \delta^b.
\end{equation}
We do not need to know what $\d \pi^a$ is for this calculation.  This
is because the only generator in $C^2(\g_{\text{pla}},\r;\RR)$ which
involves $\pi^a$ is obviously a coboundary.  The $2$-cocycles are
spanned by $\left<\delta_{ab} \pi^a \wedge \delta^b, \eta \wedge
  \zeta\right>$ and hence the cohomology is one-dimensional:
\begin{equation}
  \label{eq:2-cohomology}
  H^2(\g_{\text{pla}};\RR) \cong H^2(\g_{\text{pla}},\r;\RR) = \left<[\eta \wedge \zeta]\right>
\end{equation}
spanned by the cohomology class of $\eta \wedge \zeta$.  This signals
the existence of a nontrivial central extension $\g$ of the planon
algebra $\g_{\text{pla}}$ obtained by adding a new generator $W$ and a
new Lie bracket
\begin{equation}
  \label{eq:new-central-bracket}
  [H,Z] = W.
\end{equation}

In summary, we will let $G$ denote the (connected, simply-connected)
centrally extended planon group, with Lie algebra $\g = \left<L_{ab},
  P_a, D_a, Q, Z, H, W\right>$ and Lie brackets
\begin{equation}
  \label{eq:central-extended-planon-brackets}
  [P_a, Z] = D_a, \qquad [P_a, D_b] = \delta_{ab} Q
  \qquad\text{and}\qquad [H,Z] = W,
\end{equation}
beyond the kinematical ones involving $L_{ab}$.  In the nomenclature
of \cite{Morand:2023emw}, this is a $(4,2)$ kinematical Lie algebra
(KLA).

\subsection{Coadjoint orbits}
\label{sec:coadjoint-orbits}

We now work out the coadjoint orbits of the centrally extended planon
group $G$.  We observe that $G \cong B \ltimes A$, where $A$
is the abelian group with Lie algebra $\a = \left<H,W\right>$ and $B$
is the Bargmann group with Lie algebra $\b = \left<L_{ab}, P_a, D_a,
  Q, Z\right>$ as already described above.  In
\cite{Figueroa-OFarrill:2024ocf} we worked out the coadjoint orbits of
the Bargmann group and we can re-use these results thanks to the
following lemma.  The notation $\ann \a$ in the Lemma means the
annihilator of $\a \subset \g$.  This is the subspace of $\g^*$
consisting of all linear functions on $\g$ which vanish identically on
$\a$.  Two of the basic isomorphism theorems in linear algebra are
that $\a^* \cong \g^*/\ann \a$ and dually that $\ann \a \cong (\g/\a)^*$.

\begin{lemma}
  Let $G = B \ltimes A$ where $A$ is abelian and let $\beta \in \ann
  \a \subset \g^*$ be a moment of $B$.  Then for all $a \in A$,
  $\Ad_a^* \beta = \beta$.
\end{lemma}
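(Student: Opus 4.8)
The plan is to reduce the group-level statement to an infinitesimal one and then exploit the semidirect-product structure. Since $A$ is abelian and, in the case at hand, connected and simply connected, every $a \in A$ is of the form $a = \exp X$ for some $X \in \a$, so it suffices to show that the infinitesimal coadjoint action annihilates $\beta$, i.e.\ that $\ad_X^* \beta = 0$ for every $X \in \a$. Unwinding the definition of the coadjoint representation, this amounts to checking that $\langle \beta, [X,Y] \rangle = 0$ for all $X \in \a$ and all $Y \in \g$.

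The key structural observation is that $\a$ is an ideal of $\g$: it is the Lie algebra of the normal factor $A$ in the semidirect product $G = B \ltimes A$. Hence $[\a, \g] \subseteq \a$, so that $[X,Y] \in \a$ whenever $X \in \a$ and $Y \in \g$. One can also see this directly by inspecting the brackets in \eqref{eq:central-extended-planon-brackets}: among the generators of $\a = \left<H, W\right>$ the element $W$ is central, and the only nontrivial bracket involving $H$ is $[H,Z] = W$, which again lands in $\a$. Either way, $\ad_X$ maps all of $\g$ into $\a$ whenever $X \in \a$.

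Now the hypothesis $\beta \in \ann \a$ says precisely that $\beta$ vanishes identically on $\a$. Combining this with $[X,Y] \in \a$ gives $\langle \beta, [X,Y] \rangle = 0$, and therefore $\ad_X^* \beta = 0$ for every $X \in \a$. To pass back to the group I would integrate this infinitesimal statement; concretely, since $\ad_H$ sends $Z \mapsto W$ and annihilates every other generator, it is nilpotent with $\ad_X^2 = 0$ for all $X \in \a$, so $\Ad_{\exp X} = \id + \ad_X$. The dual computation then reads $\langle \Ad_a^* \beta, Y\rangle = \langle \beta, Y\rangle - \langle \beta, [X,Y]\rangle = \langle \beta, Y\rangle$, which reproduces $\Ad_a^* \beta = \beta$ directly.

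I do not expect a genuine obstacle here: the statement is essentially a formal consequence of $\a$ being an ideal together with $\beta$ annihilating it. The only point demanding a little care is the transition from the Lie algebra to the group, which is why I would either invoke the connectedness of $A$ or, more safely, use the finite formula $\Ad_{\exp X} = \id + \ad_X$ afforded by the nilpotency of $\ad_X$, thereby sidestepping any convergence or connectedness subtleties altogether.
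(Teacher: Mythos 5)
Your proof is correct, and it rests on exactly the same structural fact as the paper's: because $A$ is the normal factor of the semidirect product, the adjoint action of $A$ on $\g$ differs from the identity only by terms landing in $\a$, which $\beta \in \ann\a$ annihilates. The difference is purely in execution. The paper works directly at the group level: for an arbitrary $a \in A$ it writes $\exp(t\Ad_{a^{-1}}X) = a^{-1}\,a(t)\exp(tX)$ with $a(t)$ a curve in $A$ (using normality), differentiates at $t=0$ to conclude $\Ad_{a^{-1}}X = X \bmod \a$, and then pairs with $\beta$. You instead pass to the infinitesimal statement $\ad_X^*\beta = 0$ via the ideal property $[\a,\g]\subseteq\a$ and then integrate back up. The paper's route has the small advantage of working for every $a \in A$ without invoking surjectivity of the exponential map or connectedness of $A$; your route is more computational and, as you correctly note, the passage back to the group is the only point needing care. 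Your fallback via nilpotency is sound here ($\ad_H$ kills everything except $Z \mapsto W$ and $W$ is central, so $\ad_X^2 = 0$ on $\g$ for $X \in \a$), though it is worth noting that even without nilpotency the exponential series causes no trouble: $\ad_X^k(Y) \in \a$ for all $k \geq 1$ since $\a$ is an ideal, so every correction term in $\Ad_{\exp(-X)} = \exp(-\ad_X)$ beyond the identity is annihilated by $\beta$. Both arguments are valid proofs of the lemma as used in the paper.
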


\begin{proof}
  Let $a \in A$ and $\beta \in \ann \a$.  Then $\Ad_a^*\beta =
  \beta \circ \Ad_{a^{-1}}$.  Let $X \in \g$.

  We claim that $\Ad_{a^{-1}} X = X \mod \a$.  Indeed, using that $A$
  is a normal subgroup
  \begin{align*}
    \exp(t \Ad_{a^{-1}} X) &= a^{-1} \exp(t X) a\\
                          &= a^{-1} \exp(t X) a \exp(-t X) \exp(t X)\\
                          &= a^{-1} a(t) \exp(t X),
  \end{align*}
  where $a(t)$ is a curve in $A$ through $a$.  Differentiating with
  respect to $t$ at $t=0$,
  \begin{equation*}
    \Ad_{a^{-1}} X = X +  \underbrace{a^{-1} \dot a(0)}_{\in \a}.
  \end{equation*}
  Therefore,
  \begin{align*}
    \left<\Ad_a^*\beta, X\right> &= \left<\beta, \Ad_{a^{-1}}X\right>\\
                                  &= \left<\beta, X \mod \a\right> &\tag{by the above claim}\\
                                  &= \left<\beta, X\right>  &\tag{since $\beta \in \ann \a$}
  \end{align*}
  and since this holds for all $X \in \g$, it follows that
  $\Ad_a^*\beta = \beta$.
\end{proof}

This has as a consequence that if $\beta$ is a Bargmann moment, its
coadjoint orbit under $G$ agrees with its Bargmann coadjoint orbit.
Indeed, we can write every $g \in G$ uniquely as $g = b a$ with $b \in
B$ and $a \in A$, and hence, since $\Ad^*$ is a representation,
\begin{equation}
  \Ad_g^* \beta = \Ad_b^* \Ad_a^* \beta = \Ad_b^* \beta,
\end{equation}
where we have used the Lemma in the second equality.

If $\tau = \alpha + \beta \in \g^*$ is such that $\beta \in \ann \a
\cong \b^*$ and $\alpha \in \ann \b \cong \a^*$, then under $g = b a$, we have that
\begin{equation}
  \label{eq:coadjoint-action-abstract}
  \begin{split}
    \Ad_g^* (\alpha + \beta) &= \Ad_b^* \Ad_a^* (\alpha + \beta)\\
    &= \Ad_b^* \beta + \Ad_b^*\Ad_a^*\alpha,
  \end{split}
\end{equation}
where $\Ad_b^* \beta$ can be read off from the results in
\cite{Figueroa-OFarrill:2024ocf}, suitably translated, as in
Appendix~\ref{sec:bargmann-to-planon-summary}.  See in particular
equation~\eqref{eq:barg-coadjoint} in general and
equation~\eqref{eq:barg-coadjoint-3d} for $n=3$.

Let us use these results to calculate the coadjoint action of the
generic element of $G$, denoted
\begin{equation}
  \label{eq:generic-G-element}
  \sg(\varphi,\tilde\varphi,\bbeta,\ba,R,s,w)=
  \sfb(\varphi,\tilde\varphi,\bbeta,\ba,R) \exp(s H + w W),
\end{equation}
on the moment
\begin{equation}
  \label{eq:G-moment}
  \sM(q,\tilde q, \bd,\p,J,E,c) = \tfrac12
  J_{ab}\lambda^{ab} + p_a \pi^a - d_a \delta^a + \tilde q \zeta + q
  \theta + E \eta + c \omega,
\end{equation}
where we have introduced $\eta, \omega \in \g^*$ obeying
$\left<\omega,W\right> =1$ and $\left<\eta, H \right> =1$ and zero
otherwise.  We calculate that $\ad^*_Z \omega = \eta$ and $\ad^*_H
\omega = - \zeta$.  The generic $G$ moment is a sum $\beta + E \eta +
c \omega$, where $\beta$ is a Bargmann moment.

Let $a = \exp(s H + w W)$ and let us calculate
$\Ad_a^* (E \eta + c W)$. Since $W$ is central, $\ad^*_W = 0$ and
hence $\Ad_a^* = \exp(s \ad^*_H)$, resulting in
\begin{equation}
  \Ad_a^* (E \eta + c W) = \exp(s \ad^*_H) (E \eta + c W) = E \eta + c
  \omega - c s \zeta.
\end{equation}
Let $b = \sfb(\varphi,\tilde\varphi,\bbeta, \ba, R)$ and
let us calculate $\Ad_b^*(E\eta + c\omega - cs \zeta)$.  Only $Z$ acts
nontrivially on this subspace, hence
\begin{equation}
  \Ad_b^*(E\eta + c\omega - cs \zeta) = \exp(\tilde\varphi\ad^*_Z)
  (E\eta + c\omega - cs \zeta) = (E + c \tilde\varphi) \eta + c \omega
  - c s \zeta.
\end{equation}

Putting this all together we now have that
\begin{equation}
  \Ad^*_{\sg(\varphi,\tilde\varphi,\bbeta,\ba,R,s,w)}
  \sM(q,\tilde q, \bd,\p,J,E,c) = \sM(q',\tilde q', \bd',\p',J',E',c')
\end{equation}
where, using equation~\eqref{eq:barg-coadjoint},
\begin{equation}
  \label{eq:G-coadjoint-action}
  \begin{split}
    J' &= RJR^T + \bbeta (R \bd + q \boldsymbol{a})^T - (R \bd + q  \boldsymbol{a}) \bbeta^T + (R \p) \ba^T - \ba (R  \p)^T\\
    \p' &= R\p + q \bbeta + \tilde\varphi (R  \bd + q \ba)\\
    \bd' &= R \bd + q \ba\\
    \tilde q' &= \tilde q + R \bd \cdot \ba + \tfrac12 q \|\bd\|^2 - c s\\
    q' &= q\\
    c' &= c\\
    E' &= E + c \tilde\varphi,
  \end{split}
\end{equation}
and in the special case of three dimensions ($n=3$) we have, using
equation~\eqref{eq:barg-coadjoint-3d}, that
\begin{equation}
  \label{eq:G-coadjoint-action-3d}
  \begin{split}
    \bj' &= R\bj - \bbeta \times (R \bd + q \boldsymbol{a}) + \ba \times R \p\\
    \p' &= R\p + q \bbeta + \tilde\varphi (R  \bd + q \ba)\\
    \bd' &= R \bd + q \ba\\
    \tilde q' &= \tilde q + R \bd \cdot \ba + \tfrac12 q
    \|\bd\|^2 -c s\\
    q' &= q\\
    c' &= c\\
    E' &= E + c \tilde\varphi.
  \end{split}
\end{equation}

\subsection{Coadjoint orbits for $n=3$}
\label{sec:coadjoint-orbits-n=3}

Since $Q$ and $W$ span the centre of $\g$, they span the kernel of the
(co)adjoint representation of $\g$.  This means that the group acting
effectively on $\g^*$ is the quotient of $G$ by the centre, which is
the subgroup generated by $Q$ and $W$.  This quotient is isomorphic to
the direct product of the Galilei group with the one-parameter
subgroup generated by $H$. In other words, coadjoint orbits are
homogeneous symplectic manifolds of $\text{Gal}\times \RR$, where
$\text{Gal}$ the group generated by
$\left<\overline{L}_{ab},\overline{P}_a, \overline{D}_a,
  \overline{Z}\right>$, where $X \mapsto \overline{X}$ is the quotient
of $\g$ by the centre.  Homogeneous symplectic manifolds of
$\text{Gal} \times \RR$ are coadjoint orbits of a
\emph{one-dimensional} central extension.  The group
$\text{Gal} \times \RR$ has two-dimensional symplectic cohomology and
the group $G$ is a two-dimensional central extension, which although
not universal (the planon group is not perfect and therefore has no
universal central extension), nevertheless is such that it surjects
onto any one-dimensional central extension.  We can see this at the
level of the Lie algebra and the details are in
Appendix~\ref{app:extensions}.  Hence every one-dimensional central
extension of $\text{Gal} \times \RR$ is the quotient of $G$ by a
one-dimensional central subgroup.  Such a subgroup is generated by a
line in the plane in $\g$ spanned by $Q$ and $W$.  There are two
classes of such lines: those spanned by $W$ and those spanned by $Q -
\lambda W$, with $\lambda\in\RR$.

Quotienting by the one-parameter subgroup generated by $W$ we obtain
the planon group $B \times \RR$, with $B$ the Bargmann group. These
are the coadjoint orbits with $c=0$, which are in one-to-one
correspondence with the Bargmann coadjoint orbits, recalled in
Table~\ref{tab:barg-coad-orbs-planon}, embedded in $\g^*$ by placing
them at a fixed value of $E$.

The orbits with $c\neq 0$ are coadjoint orbits of the quotient of $G$
by the central subgroup generated by $Q - \lambda W$.  Those coadjoint
orbits sit inside the hyperplane of $\g^*$ corresponding to the zeros
of the linear function $Q-\lambda W$, hence those moments with $q =
\lambda c$.  The Lie algebra of this group is generated by
$\left<L_{ab},P_a, D_a, H, Z, W\right>$ and the non-generic brackets are now
\begin{equation}
  [P_a, Z] = D_a, \qquad [P_a, D_b] = \delta_{ab} \lambda W \qquad\text{and}\qquad [H,Z] = W.
\end{equation}
We have two cases to consider: $\lambda = 0$ and $\lambda \neq 0$, in
which case we may set $\lambda = 1$ by rescaling $W$ and $H$.  (Later
in Section~\ref{sec:quant-plan-part}, when we discuss UIRs, it is
more convenient not to do the rescaling and keep $\lambda$ general.)

The case $\lambda = 0$ is when we quotient by the central subgroup
generated by $Q$ and hence corresponds to those coadjoint orbits with
$q=0$ and $c \neq 0$. From equation~\eqref{eq:G-coadjoint-action-3d}
we see that the coadjoint action of $\text{Gal}\times \RR$ is given in
this case by
\begin{equation}
  \label{eq:GalxR-coadjoint-action-lambdaeq0}
  \begin{split}
    \bj' &= R\bj - \bbeta \times R \bd + \ba \times R \p\\
    \p' &= R\p + \tilde\varphi R  \bd\\
    \bd' &= R \bd\\
    \tilde q' &= \tilde q + R \bd \cdot \ba -c s\\
    E' &= E + c \tilde\varphi.
  \end{split}
\end{equation}

In this case it is evident that $\|\bd\|^2$, $\|\p \times
\bd\|^2$ and $E \|\bd\|^2 - c \p \cdot
\bd$ are polynomial functions which are constant on the
orbits.  We may enumerate these orbits as follows.

\begin{itemize}
\item ($\bd=\bzero$)  In this case, the coadjoint action
  simplifies to
  \begin{equation}
    \begin{split}
    \bj' &= R\bj + \ba \times R \p\\
    \p' &= R\p \\
    \tilde q' &= \tilde q - c s\\
    E' &= E + c \tilde\varphi,
    \end{split}
  \end{equation}
  from where we observe that $\|\p\|^2$ and $\p \cdot \bj$ are
  polynomial invariants of the orbits.

  \begin{itemize}
  \item ($\p = \bzero$)  In this case the coadjoint action simplifies
    further to
    \begin{equation}
      \begin{split}
        \bj' &= R\bj \\
        \tilde q' &= \tilde q - c s\\
        E' &= E + c \tilde\varphi,
      \end{split}
    \end{equation}
    and hence $\|\bj\|^2$ is invariant.
    
    If $\bj = \bzero$, then the orbit is the affine plane where
    $E,\tilde\varphi$ can take any values and $c = c_0 \neq 0$.  The
    equations for these two-dimensional orbits are
    \begin{equation}
      \p = \bd = \bj = \bzero, \qquad q=0
      \qquad\text{and}\qquad c=c_0 \neq 0.
    \end{equation}
    We see that there are 11 equations, as expected for two-dimensional
    orbits of 13-dimensional Lie group.
    
    If $\|\bj\|^2 = \ell^2 > 0$, the orbit is the product of a
    sphere of radius $\ell$ and the affine plane above.  The orbits are
    four-dimensional and indeed they are cut out by 9 polynomial
    equations:
    \begin{equation}
      \p = \bd= \bzero, \qquad \|\bj\|^2 = \ell^2, \qquad q=0
      \qquad\text{and}\qquad c= c_0\neq 0.
    \end{equation}

    \item ($\p \neq \bzero$) Let $\|\p\|^2 = p^2>0$ and $\p \cdot \bj =
      hp \in \RR$.  These two equations are supplemented by 5 more
      equations: $\bd=\bzero$, $q=0$ and $c=c_0 \neq 0$.
      The orbit is 6-dimensional and looks like the product of the
      cotangent bundle of the 2-sphere of radius $p$ and the affine
      plane above.  A convenient orbit representative for this orbit
      is $\sM(0,0,\bzero,p \bu, h\bu,0,c_0)$ where $\bu$ is a fixed
      unit vector.  It is not hard to determine that the stabiliser of
      that point is 7-dimensional, as expected.
  \end{itemize}

\item ($\bd \neq \bzero$) Let $\|\bd\|^2 = d^2 > 0$.

  \begin{itemize}
  \item ($\p \times \bd = \bzero$)  This says that $\p$ and
    $\bd$ are collinear: $\p = \pm \tfrac{p}{d} \bd$, with the sign
    denoting whether they are parallel or antiparallel.  In this case,
    $\bj \cdot \bd$ is now invariant and so is $Ed \mp cp$.  The
    coadjoint action becomes now
    \begin{equation}
      \begin{split}
        \bj' &= R\bj + (\pm \tfrac{p}{d} \ba - \bbeta) \times R \bd\\
        \p' &= (\pm \tfrac{p}{d} + \tilde\varphi) R\bd\\
        \bd' &= R \bd\\
        \tilde q' &= \tilde q + R \bd \cdot \ba -c s\\
        E' &= E + c \tilde\varphi.
      \end{split}
    \end{equation}
    The orbits are six-dimensional, being cut out by 7 equations:
    \begin{multline}
      q=0, \quad c=c_0 \neq 0, \quad \p \times \bd =
      \bzero, \quad \|\bd\|^2 = d^2 > 0,\\
      \|E \bd - c \p\|^2 = E_0^2 d^2 \geq 0 \quad\text{and}\quad \bj
      \cdot \bd = hd \in \RR.
    \end{multline}
    We can take as orbit representative the moment $\sM(0,0,d
    \bu, \pm \frac{E_0 d}{c_0} \bu, h \bu,0,c_0)$, where $\bu$ is a
    fixed unit vector.  It is not hard to show that the stabiliser
    subgroup of that moment is 7-dimensional, as expected.  We must
    distinguish a special case of this orbit: when $E_0 = 0$.  In that
    case, we can choose as representative the moment with both $E=0$
    and $\p = \bzero$:  $\sM(0,0,d \bu, \bzero, h \bu,0,c_0)$.  There
    is no longer a sign ambiguity in this case.

  \item ($\p \times \bd \neq \bzero$)  We decompose $\p = \frac{p}{d} \cos\theta
    \bd + \p_\perp$, where $\p_\perp \cdot \bd = 0$.  Let
    $\|\p_\perp\|^2 = p^2 \sin^2 \theta$.  Then the orbit is cut out
    by 5 equations:
    \begin{multline}
      q = 0, \quad c=c_0 \neq 0, \quad \|\bd\|^2 = d^2 >
      0, \quad \|\p \times \bd\|^2 = (p d \sin\theta)^2 > 0\\
      \quad\text{and}\quad E d - c_0 p \cos\theta = \epsilon_0 d \in \RR.
    \end{multline}
    The orbit is parametrised by the values $d,p,\theta,c$, where the
    invariants are $c, d, p\sin\theta$ and $\epsilon_0 := E -
    \frac{cp}{d}\cos\theta$.  We can take as orbit representative the
    moment $\sM(0,0,d \bu, p \sin\theta \bu^\perp -
    \frac{\epsilon_0d}{c_0} \bu,\bzero,0,c_0)$, where $\bu, \bu^\perp$
    are chosen perpendicular unit vectors.  It is not hard to
    determine the stabiliser subgroup and show that it is
    5-dimensional, as expected.
  \end{itemize}
\end{itemize}

The case $\lambda =1$ sets $q=c$ in
equation~\eqref{eq:G-coadjoint-action-3d} to give
\begin{equation}
  \label{eq:GalxR-coadjoint-action-lambdaeq1}
  \begin{split}
    \bj' &= R\bj - \bbeta \times (R \bd + c \boldsymbol{a}) + \ba \times R \p\\
    \p' &= R\p + c \bbeta + \tilde\varphi (R  \bd + c \ba)\\
    \bd' &= R \bd + c \ba\\
    \tilde q' &= \tilde q + R \bd \cdot \ba + \tfrac12 c \|\bd\|^2 -c s\\
    E' &= E + c \tilde\varphi.
  \end{split}
\end{equation}

We can set $s = \tfrac1c (\tilde q + R \bd\cdot \ba) +
\tfrac12 \|\bd\|^2$ and $\tilde\varphi = -E/c$ to set
$\tilde q' = E' = 0$.  Choose $\ba = - \tfrac1c R\bd$ to
set $\bd' = \bzero$ and choose $\bbeta =
-\tfrac1c R \p$ to set $\p' = 0$.  Then we remain with $\bj' = R(\bj -
\tfrac1c \bd\times\p)$.  Therefore we have two kinds of
orbits:
\begin{itemize}
\item those with $\bj' = \bzero$, which are 8-dimensional orbits with
  stabiliser $\SO(3) \times \RR^2$; and
\item those with $\bj' \neq \bzero$, which are 10-dimensional orbits
  with stabiliser the $\SO(2) \times \RR^2$, with $\SO(2)$ the
  subgroup of $\SO(3)$ which preserves $\bj'$.
\end{itemize}

The 8-dimensional orbits are determined by the equation $c\bj -
\bd \times \p= \bzero$, apart from the ones which set the
constant values of $q = c = c_0$.  They consist of the points
\begin{equation}
  \left\{ \sM(c_0,\tilde q, \bd, \p, \tfrac1q
    \bd\times \p,E,c_0)~  \middle | ~\bd, \p \in \RR^3
    \quad\text{and}\quad \tilde q, E \in \RR\right\} \subset \g^*.
\end{equation}
We may take $\sM(c_0,0,\bzero,\bzero,\bzero,0,c_0)$ as orbit
representative.

The 10-dimensional orbits are determined by the equations $\|c\bj -
\bd \times \p\|^2 = \ell^2$, with $\ell > 0$ and $q = c =
c_0$.  They consist of the points
\begin{equation}
  \left\{ \sM(c_0,\tilde q, \bd, \p, \ell \bu,E,c_0)  ~\middle
    | ~   \bd, \p \in \RR^3,\quad \bu \in S^2\subset \RR^3 \quad\text{and}\quad \tilde q, E \in \RR\right\} \subset \g^*.
\end{equation}

\begin{table}[h]
  \centering
    \caption{Coadjoint orbits of the (extended) planon group $G$}
    \label{tab:planon-coad-orbs}
    \setlength{\extrarowheight}{3pt}
  \resizebox{\linewidth}{!}{
    \begin{tabular}{*{3}{>{$}l<{$}}>{$}l<{$}}
      \multicolumn{1}{l}{\#} & \multicolumn{1}{c}{Orbit representative} & \multicolumn{1}{c}{Stabiliser} & \multicolumn{1}{c}{Equations for orbits}\\
                             &
                               \multicolumn{1}{c}{$\alpha=\sM(q,\tilde  q, \bd, \p,\bj,E,c)$} & \multicolumn{1}{c}{$G_\alpha \ni \sg(\varphi,\tilde\varphi,\bbeta,\ba,R,s,w)$} & \\ \midrule \rowcolor{blue!7}
      0& \sM(0,\tilde q_0,\bzero,\bzero,\bzero,E_0,0) & G & c=0, E=E_0, q=0, \tilde q = \tilde q_0, \bd= \p = \bj = \bzero \\
      2& \sM(0,\tilde q_0,\bzero,\bzero,\ell\bu,E_0,0) & \{\sg(\varphi,\tilde\varphi,\bbeta,\ba,R,s,w) \mid R\bu = \bu\} & c=0, E=E_0, q=0, \tilde q = \tilde q_0, \bd = \p = \bzero, \|\bj\|=\ell\\\rowcolor{blue!7}
      2'& \sM(0,0,\bzero,\bzero,\bzero,0,c_0) & \{\sg(\varphi,0,\bbeta,\ba,R,0,w)\} & c=c_0, q=0, \p = \bd = \bj = \bzero\\
      4& \sM(0,\tilde q_0,\bzero, p\bu, h\bu,E_0,0) & \{\sg(\varphi,\tilde\varphi,\bbeta,a \bu,R,s,w) \mid R\bu = \bu\} & c=0, E=E_0, q=0, \tilde q= \tilde q_0, \bd = \bzero, \|\p\|= p, \bj \cdot \p = h p\\\rowcolor{blue!7}
      4'& \sM(0,0,\bzero,\bzero,\ell \bu,0,c_0) & \{\sg{\varphi,0,\bbeta,\ba,R,0,w) \mid R\bu = \bu}\} & c=c_0, q=0, \p = \bd = \bzero, \|\bj\|=\ell\\
      6& \sM(q_0,\tilde q_0,\bzero,\bzero,\bzero,E_0,0) & \{\sg(\varphi,\tilde\varphi,\bzero,\bzero,R,s,w)\} & c=0, E=E_0, q=q_0, \tfrac1{2 q}(\|\bd\|^2- 2 q \tilde q) = \tilde q_0, q\bj= \bd\times\p\\\rowcolor{blue!7}
      6'& \sM(0,0,d \bu,\bzero,\bzero,E_0,0) & \{\sg(\varphi,0,\beta \bu,\ba,R,s,w) \mid R\bu = \bu, \ba \cdot \bu = 0 \} & c=0, E=E_0, q=0, \|\bd\|=d>0, \bd \times \p = \bzero\\
      6''& \sM(0,0,\bzero,p \bu, h \bu,0,c_0) & \{\sg(\varphi,0,\bbeta,a \bu, R,0,w) \mid R\bu = \bu\} & c=c_0, q=0, \bd = \bzero, \|\p\|=p, \bj \cdot \p = hp\\\rowcolor{blue!7}
      6_\pm'''& \sM(0,0,d\bu,\pm \frac{E_0 d}{c_0}\bu,h\bu,0,c_0) & \{\sg(\varphi,0,\bbeta,\ba,R,0,w) \mid R\bu =\bu, (d\bbeta \mp p\ba) \times \bu = \bzero\} & c=c_0, q=0,\p \times \bd = \bzero, \|\bd\|=d, \bj \cdot \bd = h d,\|E\bd-c\p\| = E_0 d\\
      6_0'''& \sM(0,0,d\bu,\bzero,h\bu,0,c_0) & \{\sg(\varphi,0,\bbeta,\ba,R,0,w) \mid R\bu =\bu, d\bbeta \times \bu = \bzero\} & c=c_0, q=0,\p =\frac{E}{c} \bd, \|\bd\|=d, \bj \cdot \bd = h d\\
      8& \sM(q_0,\tilde q_0,\bzero,\bzero, \ell \bu,E_0,0) & \{\sg(\varphi,\tilde\varphi,\bzero,\bzero,R,s,w) \mid R\bu = \bu\} & c=0, E=E_0, q=q_0, \tfrac1{2 q}(\|\db\|^2- 2 q\tilde q) = \tilde q_0, \|q\bj - \bd \times\p \| = \ell\\\rowcolor{blue!7}
      8'& \sM(0,0,d\bu, p\bu^\perp,\bzero,E_0,0) & \{\sg(\varphi,0,\beta \bu, a \bu^\perp,\pm I,s,w)\} & c=0, E=E_0, q = 0, \|\bd\| = d, \|\bd \times \p\| = d p \\
      8''& \sM(0,0,d\bu, p \sin\theta \bu^\perp - \frac{\epsilon_0 d}{c_0} \bu, \bzero,0,c_0) & \{\sg(\varphi,0,\bbeta,\ba,\pm I,0,w) \mid \ba \times \p = \bbeta \times \bd\} & c=c_0, q=0, \|\bd\|= d , \|\p \times \bd\| = pd\sin\theta, (E-\epsilon_0) d^2 = c \p \cdot \bd\\\rowcolor{blue!7}
      8'''& \sM(c_0,0,\bzero,\bzero,\bzero,0,c_0) & \{\sg(\varphi,0,\bzero,\bzero,R,0,w)\} & c=q=c_0, c \bj = \bd \times \p\\
      10& \sM(c_0,0,\bzero,\bzero,\ell \bu,0,c_0) & \{\sg(\varphi,0,\bzero,\bzero,R,0,w) \mid R\bu = \bu\} & c=q=c_0, \|\bj - \frac1{c}\bd \times \p\| = \ell\\
   \bottomrule
    \end{tabular}
  }
  \caption*{This table lists the different
    coadjoint orbits of the (extended) planon group ordered by
    increasing dimension.  In each case we exhibit a label (from which
    one can read off the dimension), an orbit representative
    $\alpha\in \g^*$, its stabiliser subgroup $G_\alpha$ inside the
    planon group and the equations defining the orbit.  In this table,
    $\bu$ and $\bu^\perp$ stand for perpendicular unit vectors in
    $\RR^ 3$.  Wherever they appear, the parameters $c_0$ and $q_0$
    are nonzero, whereas the parameters $\ell$,
    $d$, $p$ and $E_0$ are positive.  The parameters $h$ and
    $\epsilon_0$ are arbitrary real numbers. The $\pm I$ in the
    stabiliser is due to the fact that we are working with the
    simply-connected spin group and $\pm I$ is the centre, which is in
    the kernel of the geometric action of the spin group on vectors
    via rotations.}
\end{table}

\section{Elementary particles and mobility}
\label{sec:actions-mobility}

In this section we study the dynamics of the elementary particles
associated with the different coadjoint orbits, paying particular
attention to their (restricted) mobility. Coadjoint orbits of the
centrally extended planon group are homogeneous symplectic manifolds
of the planon group.  Let $\eO_\alpha$ be the coadjoint orbit
corresponding to $\alpha \in \g^*$.  The orbit map $\pi_\alpha : G \to
\eO_\alpha$ sends every $g \in G$ to $\Ad_g^* \alpha$.  Let $\omega
\in \Omega^2(\eO_\alpha)$ denote the Kirillov--Kostant--Souriau
symplectic form on the coadjoint orbit: it is a closed non-degenerate
$2$-form.  Pulling back $\omega$ via the orbit map, we get a
presymplectic form $\pi_\alpha^*\omega \in \Omega^2(G)$ in the group:
a closed $2$-form.  In fact, a calculation shows that it is not just
closed, but actually exact: $\pi_\alpha^*\omega =
-d\left<\alpha,\vartheta\right>$, where $\vartheta \in \Omega^1(G;\g)$
is the left-invariant Maurer--Cartan one-form on $G$ and
$\left<-,-\right>$ denotes the dual pairing.  The primitive
$\left<\alpha,\vartheta\right>$ defines a variational problem for
curves on $G$.  If $\gamma : I \to G$, where $I$ is some interval
containing $0$ in the real line, then we define
\begin{equation}\label{eq:action-functional-general}
  S[\gamma] := \int_I \left<\alpha, \gamma^*\vartheta\right> = \int_I
  \left<\alpha, \gamma(\tau)^{-1} \dot \gamma(\tau)\right> d\tau,
\end{equation}
where $\tau \in I$ is the parameter along the curve.  In
Section~\ref{sec:MC-one-form} we calculate the left-invariant
Maurer--Cartan one-form on $G$ and in
Section~\ref{sec:particle-dynamics} we perform a preliminary geometric
analysis of the particle trajectories resulting from the above
variational principle.  In
Section~\ref{sec:particle-dynamics-orbitwise} we concentrate on some
of the more interesting coadjoint orbits (those which are coadjoint
orbits of the planon group itself) and discuss action functionals for
trajectories on the corresponding aristotelian spacetime where the
planons live.

\subsection{Maurer--Cartan one-form}
\label{sec:MC-one-form}

We record here the pull-back to the space of group parameters of the
left-invariant Maurer--Cartan one-form on $G$.  Let $g = b a$, with $b
\in B$ and $a = \exp(t H + w W)$.  Then
\begin{equation}
  g^{-1} dg = a^{-1} (b^{-1}db) a+ a^{-1} d a = \exp(t \ad_H)
  (b^{-1}db) + dt H + dw W,
\end{equation}
where we have used that $W$ is central.  We can read off $b^{-1}db$
from equation~\eqref{eq:MC-one-form-Bargmann}:
\begin{equation}
  b^{-1}db = \sA(d\varphi - \ba^T d\bbeta - \tfrac12
  \|\ba\|^2d\tilde\varphi, d\tilde\varphi, R^T(d\bbeta +
  \ba d\tilde\varphi), R^T d\ba, R^T dR)
\end{equation}
and hence
\begin{equation}
  \exp(t \ad_H) (b^{-1}db)  = \sA(d\varphi - \ba^T d\bbeta - \tfrac12
  \|\ba\|^2d\tilde\varphi, d\tilde\varphi, R^T(d\bbeta +
  \ba d\tilde\varphi), R^T d\ba, R^T dR) + t d\tilde\varphi W,
\end{equation}
which we can put together to arrive at an expression for $g^{-1}dg$.
Contracting that expression with the generic moment $\sM(q,
\tilde q, \bd,\p, J,E,c)$ we find (using
equation~\eqref{eq:MC-one-form-contracted-Bargmann}),
\begin{multline}
  \label{eq:MC-one-form-contracted}
  \left<\sM(q,\tilde q, \bd,\p, J,E,c), g^{-1}dg\right> =
  c (dw + t d \tilde\varphi) + E dt + q d\varphi - (\tilde q +
  \tfrac12 q \|\ba\|^2 + (R\bbeta)^T\ba) d\tilde\varphi\\ -
  (R\bd + q \ba)^T d\bbeta + (R\p)^T d\ba +
  \tfrac12 \Tr J^T R^T d R.
\end{multline}
Specialising to three dimensions, we find
\begin{multline}
  \label{eq:MC-one-form-contracted-3d}
    \left<\sM(q,\tilde q, \bd,\p, \bj,E,c), g^{-1}dg\right> =
  c (dw + t d \tilde\varphi) + E dt + q d\varphi - (\tilde q +
  \tfrac12 q \|\ba\|^2 + R\bbeta\cdot \ba) d\tilde\varphi\\ -
  (R\bd + q \ba)\cdot d\bbeta + R\p\cdot d\ba +
  \bj \cdot \varepsilon^{-1}(R^{-1}dR).
\end{multline}

\subsection{Geometric particle dynamics}
\label{sec:particle-dynamics}

We interpret the planon particle dynamics as taking place in an
aristotelian spacetime whose momenta lie in the different coadjoint
orbits of $G$. The discussion here follows that of
\cite[Section~9]{Figueroa-OFarrill:2024ocf} which itself follows the
discussion in \cite[Appendix~A.4]{Figueroa-OFarrill:2023vbj} (see also
e.g.,~\cite{Oblak:2016eij,Barnich:2022bni,Basile:2023vyg,Lahlali:2024qnk,Barnich:2025ton}
and references therein). As shown in that latter reference, the curves
$\gamma(\tau)$ extremising the action functional in
equation~\eqref{eq:action-functional-general} are of the form
$\gamma(\tau) = g_0 c(\tau)$, where $g_0 \in G$ and
$c: I \to G_\alpha$ is any curve in the stabiliser of $\alpha$. Under
the orbit map, $\pi_\alpha(\gamma(\tau)) = \Ad_{g_0}^*\alpha$, which
is a fixed point in the coadjoint orbit $\eO_\alpha$ of $\alpha$. The
curve $\gamma$ can now be mapped to any other homogeneous space of $G$
via the corresponding orbit map. For example, let $M$ denote the
aristotelian spacetime with (non-effective) Klein pair $(\g, \g_o)$,
where $\g_o= \left<L_a, D_a, Q, W, Z\right>$. Let $G_o$ be the
connected subgroup of $G$ with Lie algebra $\g_o$ and let $o \in M$ be
any point with stabiliser $G_o$. We may use the orbit map
$\pi_o: G \to M$ to map the curves $\gamma(\tau)$ on $G$ to curves on
$M$: these are the classical trajectories whose momenta lie in
$\eO_\alpha$.

If the curve $c$ landed in the intersection $G_\alpha \cap G_o$ then
clearly this would correspond to a trajectory which is fixed at the
point $g_0 \cdot o \in M$. Hence we can choose a complement $\fm$ to
$\g_\alpha \cap \g_o$ in $\g_\alpha$, so that
$\g_\alpha = \left(\g_\alpha \cap \g_o \right) \oplus \fm$ and give
local coordinates to $G_\alpha$ by choosing bases for
$\g_\alpha \cap \g_o$ and for $\fm$ and write elements in $G_\alpha$
locally as a product of exponentials $\exp(X)\exp(Y)$ with
$Y \in \g_\alpha \cap \g_o$ and $X \in \fm$. The local coordinates are
then the coefficients of $X$ and $Y$ relative to the chosen bases. The
complement $\fm$ detects the possible directions along which the
particle trajectory can evolve.

In Table~\ref{tab:stabilisers} we list the stabiliser algebras
$\g_\alpha$ for each type of coadjoint orbit in
Table~\ref{tab:planon-coad-orbs}, the intersection
$\g_\alpha \cap \g_o$ and a choice of vector space complement $\fm$
such that $\g_\alpha = \left(\g_\alpha \cap \g_o \right) \oplus \fm$.

\begin{table}[h]
  \centering
    \caption{Stabilisers associated to particle dynamics}
    \label{tab:stabilisers}
    \setlength{\extrarowheight}{3pt}
    \resizebox{\linewidth}{!}{
    \begin{tabular}{*{5}{>{$}l<{$}}}
      \# & \alpha =\sM(q,\tilde q, \bd,\p,\bj,E,c) \in \g^* & \g_\alpha & \g_\alpha \cap \g_o& \fm\\
      \midrule\rowcolor{blue!7}
      0& \sM(0,\tilde q_0,\bzero,\bzero,\bzero,E_0,0) & \g & \g_o & \left<H,P_a\right>\\
      2& \sM(0,\tilde q_0,\bzero,\bzero,\ell\bu,E_0,0) & \left<W,H,Q,Z,D_a,P_a,L_3\right> & \left<W,Q,Z,D_a,L_3\right> & \left<H,P_a\right>\\\rowcolor{blue!7}
      2'& \sM(0,0,\bzero,\bzero,\bzero,0,c_0) & \left<W,Q,D_a,P_a,L_a\right> & \left<W,Q,D_a,L_a\right> & \left<P_a\right>\\
      4& \sM(0,\tilde q_0,\bzero, p\bu, h\bu,E_0,0) & \left<W,H,Q,Z,D_a,P_3,L_3\right> & \left<W,Q,Z,D_a,L_3\right> & \left<H,P_3\right>\\\rowcolor{blue!7}
      4'& \sM(0,0,\bzero,\bzero,\ell \bu,0,c_0) & \left<W,Q,D_a,P_a,L_3\right> & \left<W,Q,D_a,L_3\right> & \left<P_a\right>\\
      6& \sM(q_0,\tilde q_0,\bzero,\bzero,\bzero,E_0,0) & \left<W,H,Q,Z,L_a\right> & \left<W,Q,Z,L_a\right> & \left<H\right>\\\rowcolor{blue!7}
      6'& \sM(0,0,d \bu,\bzero,\bzero,E_0,0) & \left<W,H,Q,D_3,P_1,P_2,L_3\right> & \left<W,Q,D_3,L_3\right> & \left<H,P_1,P_2\right>\\
      6''& \sM(0,0,\bzero,p \bu, h \bu,0,c_0) & \left<W,Q,D_a,P_3,L_3\right> & \left<W,Q,D_a,L_3\right> & \left<P_3\right>\\\rowcolor{blue!7}
      6_\pm'''& \sM(0,0,d\bu,\pm \frac{E_0 d}{c_0}\bu,h\bu,0,c_0) & \left<W,Q,D_3,L_3,P_3,P_1\mp\frac{E_0 d}{c_0}D_1,P_2\pm\frac{E_0 d}{c_0}D_2\right> & \left<W,Q,D_3,L_3\right> & \left<P_1\mp\frac{E_0 d}{c_0}D_1, P_2\mp\frac{E_0 d}{c_0}D_2,P_3 \right>\\
      6_0'''& \sM(0,0,d\bu,\bzero,h\bu,0,c_0) & \left<W,Q,D_3,L_3,P_3,P_1,P_2\right> & \left<W,Q,D_3,L_3\right> & \left<P_a\right>\\\rowcolor{blue!7}
      8& \sM(q_0,\tilde q_0,\bzero,\bzero, \ell \bu,E_0,0) & \left<W,H,Q,Z,L_3\right> & \left<W,Q,Z,L_3\right> & \left<H\right>\\
      8'& \sM(0,0,d\bu, p\bu^\perp,\bzero,E_0,0) & \left<W,H,Q,D_3,P_1\right> & \left<W,Q,D_3\right> & \left<H,P_1\right>\\\rowcolor{blue!7}
      8''& \sM(0,0,d\be_3, p \sin\theta \be_2 - \frac{\epsilon_0 d}{c_0} \be_3, \bzero,0,c_0) & \left<W,Q,D_3, P_1-\frac{\epsilon_0}{c_0}D_1, P_3 - \frac{p}{d}\sin\theta D_1\right> & \left<W,Q,D_3\right> & \left<P_1-\frac{\epsilon_0}{c_0}D_1, P_3 - \frac{p}{d}\sin\theta D_1\right>\\
      8'''& \sM(c_0,0,\bzero,\bzero,\bzero,0,c_0) & \left<W,Q,L_a\right> & \left<W,Q,L_a\right> & 0\\\rowcolor{blue!7}
      10& \sM(c_0,0,\bzero,\bzero,\ell \bu,0,c_0) & \left<W,Q,L_3\right> & \left<W,Q,L_3\right> & 0\\
   \bottomrule
    \end{tabular}
  }
\end{table}

One can read off from this table, that particles corresponding to
orbits with $c = c_0 \neq 0$ do not evolve in time: as the time
translation generator $H$ is not in the stabiliser of $\alpha$: this
just reflects that energy is not conserved for such orbits and is
indeed unbounded from both below and above.  The remaining orbits,
labelled $0,2,4,6,6',8$ and $8'$, are studied in turn below.

We can read off some properties of those orbits from the table:
particles with momenta in coadjoint orbits of types $0,2$ are
unrestricted; those in orbits of types $4,8'$ move along ``lines'';
those in orbits of type $6'$ move along ``planes''; whereas those in
orbits of types $6,8$ do not move.  These statements do not
necessarily mean that mobility is restricted, but they simply describe
the existence of such particle trajectories.  This is analogous to how
a massive (galilean or minkowskian) particle can be boosted to its
rest frame, where it appears as if it does not move.

\subsection{Particle dynamics}
\label{sec:particle-dynamics-orbitwise}

Based on our understanding of the coadjoint orbits, we will now derive
actions for the orbits where $c = 0$, and we shall analyse their
mobility restrictions. Since with the standard planon interpretation
of the generators a nonzero $c$ results in an energy that is unbounded
from below, we will exclude such cases from the current analysis.

Since $c=0$ the coadjoint orbits and actions can again be mapped to
the Bargmann case (see, e.g.,~\cite{Figueroa-OFarrill:2024ocf}),
although with different physical interpretation. In particular, some
of the orbits most relevant in the context of planons, such as those
describing dipoles, correspond to exotic massless orbits in the
Bargmann framework.

In the following, to simplify the interpretation of the orbits, we
will replace $\boldsymbol{a}$ with $\boldsymbol{x}$, where
$\boldsymbol{x}$ represents the position of the corresponding particle
and the orbit numbers and representatives are taken from
Table~\ref{tab:planon-coad-orbs}.

\subsubsection{Orbit \#0 (trace particle)}
\label{sec:orbit-0-trace}

Let us consider a coadjoint orbit with a representative given by
\begin{equation}
  \label{eq:rep_trace_part}
  \alpha =\sM(0,\tilde{q}_{0},\boldsymbol{0},\boldsymbol{0},\boldsymbol{0},E_0,0) \, ,
\end{equation}
where $E_{0}$ and $\tilde{q}_{0}$ denote the energy and the trace
charge, respectively.

The Lagrangian is explicitly obtained from the Maurer-Cartan
form~\eqref{eq:MC-one-form-contracted-3d}, resulting in
\begin{equation}
  L\left[\tilde{\varphi},t\right]=\tilde{q}_{0}\dot{\tilde{\varphi}}-E_{0}\dot t \, .
\end{equation}
Since this Lagrangian is a boundary term, there is no dynamic
associated with it. Given that the stabiliser is the entire group,
this ``trace particle'' can also be interpreted as a vacuum
configuration. Indeed, from the perspective of the Bargmann group,
this orbit is viewed as a vacuum (see for example,
\cite{Figueroa-OFarrill:2024ocf}). It is important to note that the
symmetries impose no restrictions on the ``mobility.''

\subsubsection{Orbit \#2 (spinning trace particle)}
\label{sec:orbit-2-spinning}

Let us consider the following representative:
\begin{equation}
  \alpha =\sM(0,\tilde{q}_{0},\boldsymbol{0},\boldsymbol{0},\boldsymbol{j},E_0,0)
\end{equation}
where $\boldsymbol{\left\Vert j\right\Vert }=\ell$. The Lagrangian
associated with this coadjoint orbit is given by
\begin{equation}
  L\left[\tilde{\varphi},t,R\left(\boldsymbol{\phi}\right)\right]=\tilde{q}_{0}\dot{\tilde{\varphi}}-E_{0}\dot{t}+\bj \cdot \varepsilon^{-1}(R^{-1}dR) \, .
\end{equation}
Note that only the spin part of the Lagrangian is not a boundary
term. Consequently, this orbit possesses only spin degrees of freedom.

Following~\cite{Figueroa-OFarrill:2023vbj}, it is useful to adopt the
following parametrisation for the rotation matrix:
\begin{equation}
  R\left(\boldsymbol{\phi}\right)=e^{\phi_{1}\varepsilon_{1}}e^{\phi_{2}\varepsilon_{2}}e^{\phi_{3}\varepsilon_{3}},\label{eq:Rot_matrix}
\end{equation}
with $\left(\varepsilon_{a}\right)_{bc}=-\varepsilon_{abc}$. In the
particular case where the angular momentum is aligned with the
$z$-axis, we can write:
\begin{equation}
  L\left[\tilde{\varphi},t,\boldsymbol{\phi}\right]=\tilde{q}_{0}\dot{\tilde{\varphi}}-E_{0}\dot{t}+\ell\left(\dot{\phi}_{3}+\sin\left(\phi_{2}\right)\dot{\phi}_{1}\right) \, .
\end{equation}
If one removes the boundary terms, the Lagrangian can be rewritten as
\begin{equation}
  L\left[\Pi^{1},\phi_{1}\right]=\Pi^{1}\dot{\phi}_{1},
\end{equation}
where $\Pi^{1}:=\ell\sin\phi_{2}$ is the canonical conjugate to
$\phi_{1}$.

In the context of the Bargmann group interpretation, this
configuration represents a ``spinning vacuum.''

\subsubsection{Orbit \#4 (spinning trace particle with momentum)}
\label{sec:orbit-4-spinning}

For this orbit, it is necessary to consider the following representative:
\begin{align}
  \alpha=\sM\left(0,\tilde{q}_{0},\boldsymbol{0},\boldsymbol{p},\boldsymbol{j},E_0,0\right),
\end{align}
where $\left\Vert \boldsymbol{p}\right\Vert ^{2}=p_{0}^{2}$ is a constant, with
$p_{0}>0$. Additionally, $\boldsymbol{j}\cdot\boldsymbol{p}=p_{0}h_{0}$
for a real constant $h_{0}$. As a consequence, we can express
\begin{align}
  \boldsymbol{p}=p_{0}\hat{\boldsymbol{n}},
\end{align}
where 
\begin{align}
  \hat{\boldsymbol{n}}=\left(\sin\phi_{2},-\sin\phi_{1}\cos\phi_{2},-\cos\phi_{1}\cos\phi_{2}\right).
\end{align}
The Lagrangian corresponding to this orbit is then given by
\begin{align}
  L\left[\tilde{\varphi},t,\boldsymbol{x},\boldsymbol{\phi}\right]=\tilde{q}_{0}\dot{\tilde{\varphi}}+\boldsymbol{p}\cdot\dot{\boldsymbol{x}}-E_{0}\dot{t}+\bj \cdot \varepsilon^{-1}(R^{-1}dR).
\end{align}
Assuming that $\boldsymbol{j}$ and $\boldsymbol{p}$ are both aligned along the $z$-axis, such that $\boldsymbol{j}=\left(0,0,j_{0}\right)$ and $\boldsymbol{p}=\left(0,0,p_{0}\right)$,
the Lagrangian can be expressed as 
\begin{align}
L\left[\tilde{\varphi},t,\boldsymbol{x},\boldsymbol{\phi}\right]=\tilde{q}_{0}\dot{\tilde{\varphi}}+h_{0}\left(\dot{\phi}_{3}+\sin\left(\phi_{2}\right)\dot{\phi}_{1}\right)+p_{0}\hat{\boldsymbol{n}}\cdot\dot{\boldsymbol{x}}-E_{0}\dot{t}.
\end{align}
If the boundary terms are discarded, and the following canonical momenta are introduced:
\begin{align}
  \boldsymbol{\pi}:=\frac{\partial L}{\partial\dot{\boldsymbol{x}}}=p_{0}\hat{\boldsymbol{n}}\qquad\qquad\qquad\Pi^{1}:=\frac{\partial L}{\partial\dot{\phi}_{1}}=h_{0}\sin\phi_{2},\label{eq:momenta_with_spin}
\end{align}
which satisfy the constraints
\begin{align}
  \left\Vert \boldsymbol{\pi}\right\Vert ^{2}-p_{0}^{2}=0\qquad\qquad\qquad p_{0}\Pi^{1}-h_{0}\pi_{1}=0,
\end{align}
then the Lagrangian in canonical form can be written as
\begin{equation}
L_{\text{can}}\left[\boldsymbol{x},\boldsymbol{\pi},\phi_{1},\Pi^{1},\eta,\eta_{1}\right] =\Pi^{1}\dot{\phi}_{1}+\boldsymbol{\pi}\cdot\dot{\boldsymbol{x}}-\eta\left(\left\Vert \boldsymbol{\pi}\right\Vert ^{2}-p_{0}^{2}\right)-\eta_{1}\left(p_{0}\Pi^{1}-h_{0}\pi_{1}\right). 
\end{equation}

The dynamics of this particle can be understood in terms of
carrollian, and galilean physics. In the carrollian context, the
dynamics arising from this Lagrangian were explored in Section 3.4 of
\cite{Figueroa-OFarrill:2023vbj}, where it was referred to as
``massless carrollion.'' This Lagrangian corresponds to one of the
massless coadjoint orbits of the Carroll group. The specific case
where the spin vanishes had been previously studied
in~\cite{deBoer:2021jej}. On the other hand, this Lagrangian appears
in one of the massless representations of the Bargmann group, where
the position of the particle is replaced by its velocity. This case
was analysed in Section~8.3.3 of~\cite{Figueroa-OFarrill:2024ocf}.

\subsubsection{Orbit \#6 (monopole)}
\label{sec:orbit-6-monopole}

The coadjoint orbit in this case is determined by the following
representative:
\begin{equation}
  \alpha=\sM\left(q_{0},\tilde{q}_{0},\boldsymbol{0},\boldsymbol{0},\boldsymbol{0},E_0,0\right) .
\end{equation}
The Lagrangian is given by
\begin{equation}
  \label{eq:monopole_new}
  L\left[\varphi,\tilde{\varphi},\boldsymbol{\beta},\boldsymbol{x},t\right]=-E_{0}\dot{t}+q_{0}\dot{\varphi}+\tilde{q}_{0}\dot{\tilde{\varphi}}-\frac{1}{2} q_{0}\left\Vert \boldsymbol{x}\right\Vert ^{2}\dot{\tilde{\varphi}}- q_{0}\boldsymbol{x}\cdot\dot{\boldsymbol{\beta}} \, .
\end{equation}
We fix the gauge $\tau = t$ and ignore the remaining boundary
terms. Furthermore, it is useful to perform integration by parts on the
last two terms. The Lagrangian can then be expressed as follows, where
the derivatives are now with respect to the physical time $t$
\begin{equation}
  \label{eq:mono-first-step}
  L\left[\tilde{\varphi},\boldsymbol{\beta},\boldsymbol{x}\right]=q_{0}\left(\boldsymbol{x}\tilde{\varphi}+\boldsymbol{\beta}\right)\cdot\dot{\boldsymbol{x}}.
\end{equation}
The Lagrangian can be written in Hamiltonian form by introducing the
conjugate variable to $\boldsymbol{x}$, defined as
\begin{equation}
  \label{eq:mom_mon}
\boldsymbol{\pi}:=\frac{\pd L}{\pd \dot{\bm{x}}}=q_{0}\left(\boldsymbol{x}\tilde{\varphi}+\boldsymbol{\beta}\right) .
\end{equation}
The Lagrangian then becomes
\begin{equation}
L\left[\boldsymbol{x},\boldsymbol{\pi}\right]=\boldsymbol{\pi}\cdot\dot{\boldsymbol{x}}.\label{eq:monopole_Lag}
\end{equation}
The equations of motion are given by 
\begin{align}
  \label{eq:eom-monop}
  \dot{\boldsymbol{x}}&=\bm{0} & \dot{\boldsymbol{\pi}}&=\bm{0} \,
\end{align}
showing that this particle cannot move. Although we refer to it as a monopole, it actually carries both electric and quadrupole charges.

It is interesting to contrast the monopole with the corresponding
Bargmann case, which is the well-known massive Galilei particle (see
Appendix~\ref{sec:bargmann-to-planon-summary} for more details
concerning the dictionary).
The starting point is the action
\begin{align}
  L[a_{+},t,\bm{x},\bm{v}] =0+ m \dot a_{+} - E_{0}\dot t  - \frac{1}{2}m\left\Vert\bm{v}\right\Vert^{2} \dot t + m \bm{v}\cdot \dot{\bm{x}} \, ,
\end{align}
which agrees with \eqref{eq:monopole_new}, up to the first term which
has no counterpart in the Galilei case, and immaterial changes of
signs. Again we gauge fix the physical time $\tau=t$ and ignore the
remaining boundary terms, which leads to
\begin{align}
  \label{eq:gal-mid}
  L[\bm{x},\bm{v}] = - \frac{1}{2}m\left\Vert\bm{v}\right\Vert^{2}  + m \bm{v}\cdot \dot{\bm{x}} \, ,
\end{align}
that must be contrasted with~\eqref{eq:mono-first-step}. In
particular, varying \eqref{eq:gal-mid} with respect to $\bm{x}$ leads
to
\begin{equation}
  \dot{\boldsymbol{v}}= \boldsymbol{0} \, ,
\end{equation}
which is the galilean analog of $\dot{\boldsymbol{x}}=\boldsymbol{0}$
of \eqref{eq:eom-monop}.

We can now vary \eqref{eq:gal-mid} with respect to $\bm{v}$, which leads to the equation
$\bm{v}=\dot{\bm{x}}$. We are then allowed to substitute this back
into the action to obtain
\begin{align}
  L[\bm{x}] =  \frac{1}{2}m\left\Vert\dot{\bm{x}}\right\Vert^{2}  \, .
\end{align}
As is well known, these particles move along straight lines. 

This example shows that the physical interpretation of the parameters
do play an important role for the mobility the particle models. 

\subsubsection{Orbit \#8 (spinning monopole)}
\label{sec:orbit-8-spinning}

Let us consider the following representative of the coadjoint orbit
\begin{equation}
  \alpha=\sM\left(q_{0},\tilde{q}_{0},0,0,\boldsymbol{j},E_0,0\right),
\end{equation}
where $\boldsymbol{\left\Vert j\right\Vert }^{2}=\ell^{2}$ is a constant.

The Lagrangian is given by
\begin{equation}
  L\left[\varphi,\tilde{\varphi},\boldsymbol{x},\boldsymbol{\beta},t,R\left(\boldsymbol{\phi}\right)\right]
  =q_{0}\dot{\varphi}+\left(\tilde{q}_{0}-\frac{1}{2}q_{0}\left\Vert \boldsymbol{x}\right\Vert ^{2}\right)\dot{\tilde{\varphi}}-q_0 \boldsymbol{x}\cdot\dot{\boldsymbol{\beta}}-E_0\dot{t}+\bj \cdot \varepsilon^{-1}(R^{-1}dR).
\end{equation}
Using the parametrisation for the rotation matrix in
\eqref{eq:Rot_matrix}, the spin part becomes
\begin{equation}
L_{\text{spin}}\left(\boldsymbol{\phi}\right)=\ell\left(\dot{\phi}_{3}+\sin\phi_{2}\right)\dot{\phi}_{1}.\label{eq:Lspin}
\end{equation}

By introducing the canonical momenta as defined in Eq. \eqref{eq:momenta_with_spin} and neglecting boundary terms, the Lagrangian in canonical form can be expressed as:
\begin{equation}
    L\left[\boldsymbol{x},\boldsymbol{\pi},\phi_{1},\Pi^{1}\right]=\Pi^{1}\dot{\phi}_{1}+\boldsymbol{\pi}\cdot\dot{\boldsymbol{x}}.
\end{equation}
This action depends on 8 independent canonical variables, which
coincides with the dimension of the coadjoint orbit.  In the context
of the Bargmann interpretation, it describes a massive spinning
particle.

\subsubsection{Orbit \#6' (dipole)}
\label{sec:orbit-6-dipole}

Let us consider the following representative:
\begin{equation}
  \alpha=\sM\left(0,0,\boldsymbol{d},\boldsymbol{0},\boldsymbol{0},E_0,0\right),
\end{equation}
where $\left\Vert \boldsymbol{d}\right\Vert ^{2}=d_{0}^{2}$ with $d_0$ being a constant.

The Lagrangian is given by
\begin{equation}
  L\left[\boldsymbol{\phi},\boldsymbol{x},\boldsymbol{\beta},\tilde{\varphi},t\right]
  =\left(R\boldsymbol{d}\right)\cdot\boldsymbol{x}\dot{\tilde{\varphi}}+\left(R\boldsymbol{d}\right)\cdot\dot{\boldsymbol{\beta}}-E_{0}\dot{t} \, .
\end{equation}
It is useful to define the following fields:
\begin{equation}
  \boldsymbol{D} := R\boldsymbol{d}_0 \qquad\qquad \tilde{Q} := \boldsymbol{D} \cdot \boldsymbol{x}\;,
\end{equation}
where the field \(\boldsymbol{D}\) represents the dipole moment
obtained by applying a generic rotation to the representative dipole
moment \(\boldsymbol{d}\). The field \(\tilde{Q}\) can be interpreted
as describing the degree of freedom associated with the trace
charge. They satisfy the following constraints
\begin{equation}
  \left\Vert \boldsymbol{D}\right\Vert ^{2}-d_{0}^{2} \approx 0\qquad\qquad\tilde{Q}-\boldsymbol{D}\cdot\boldsymbol{x}\approx 0 \, .
\end{equation}
Thus, we can write
\begin{align}
L\left[t,E,\tilde{\varphi},\boldsymbol{D},\boldsymbol{\beta},\boldsymbol{x},\boldsymbol{\pi},\eta_{1},\eta_{2},\boldsymbol{\eta}_3,N\right]=&-E\dot{t}+\tilde{Q}\dot{\tilde{\varphi}}+\boldsymbol{D}\cdot\dot{\boldsymbol{\beta}}+\boldsymbol{\pi}\cdot\dot{\boldsymbol{x}} \nonumber \\
&-\eta_{1}\mathcal{C}_{1}-\eta_{2}\mathcal{C}_{2}-\boldsymbol{\eta}_{3}\cdot\boldsymbol{\mathcal C}_3 - N \mathcal C_4\, ,
\end{align}
with the constraints
\begin{align}
  \mathcal{C}_{1}&:=\left\Vert \boldsymbol{D}\right\Vert^{2}-d_{0}^{2}\approx 0\\
  \mathcal{C}_{2}&:=\tilde{Q}-\boldsymbol{D}\cdot\boldsymbol{x}\approx 0\\
  \boldsymbol{\mathcal{C}}_{3}&:=\boldsymbol{\pi}\,\approx \bm{0} \\
  \mathcal{C}_{4}&:=E-E_0\approx0 \, .
\end{align}
The equations of motion are given by
\begin{align}
  \dot{\tilde{\varphi}}&=\eta_{2} & \dot{\tilde{Q}}&=0\\
  \dot{\boldsymbol{\beta}}&=2\eta_{1}\boldsymbol{D}-\eta_{2}\boldsymbol{x} & \dot{\boldsymbol{D}}&=0\\
  \dot{\boldsymbol{x}}&=\boldsymbol{\eta}_{3} & \dot{\boldsymbol{\pi}}&=\eta_{2}\boldsymbol{D}\\
  \dot{t}&=-N & \dot{E}&=0 \, .
\end{align}
The time derivative of the constraints do not generate secondary
constraints. However, it imposes conditions on certain Lagrange
multipliers, indicating that some of the constraints are of
second-class. Indeed, the time derivative of $\bm{\mathcal{C}}_{3}$
yields $\eta_{2}=0$, while the time derivative of $\mathcal{C}_2$
gives
\begin{align}
  \boldsymbol{D}\cdot\dot{\boldsymbol{x}}=\boldsymbol{D}\cdot\boldsymbol{\eta}_{3}=0 \, .
\end{align}
Thus, the velocity component in the direction of the dipole moment
vanishes, a key characteristic of a planon. Moreover, as we will show
below, the components of $\boldsymbol{x}$ perpendicular to the dipole
moment transform under gauge transformations, indicating
that the dipole can move freely within the plane orthogonal to its
dipole moment.

To complete the analysis, we will classify the constraints into first
and second-class constraints. The second-class constraints are
represented by the pair:
\begin{align}
  \chi_{1}&:= \mathcal C_1 = \tilde{Q}-\boldsymbol{D}\cdot\boldsymbol{x} & \chi_{2}&:=\boldsymbol{D}\cdot\boldsymbol{\pi} \, .
\end{align}
Indeed, they obey the following Poisson bracket:
\begin{align}
  \left\{ \chi_{1},\chi_{2}\right\} =-d_{0}^{2}\neq0.
\end{align}
On the other hand, the first-class constraints are given by
\begin{align}
  \mathcal C_1 = \left\Vert \boldsymbol{D}\right\Vert ^{2}-d_{0}^{2}\approx0\qquad\qquad\qquad \boldsymbol{\pi}^{T}:=\boldsymbol{\pi}-\frac{1}{d_{0}^{2}}\left(\boldsymbol{D}\cdot\boldsymbol{\pi}\right)\boldsymbol{D}\approx\bm{0}\,.
\end{align}
Notice that the constraint
$\boldsymbol{\mathcal C}_3 = \boldsymbol{\pi} \approx 0$ was split
between its second-class part $\chi_2$ and its first-class part
$\boldsymbol{\pi}^T$. As mentioned above, the first-class constraints
generate, via $G= \bm{\lambda}(t)\cdot \bm{\pi}^{T}$, gauge
transformations of the coordinates
$\delta_{\lambda}\x=\{ \x,G\}=\bm{\lambda}^{T}$, which are
perpendicular to the dipole moment.

Given the 16 canonical variables, 4 first-class
constraints, and 2 second-class constraints, the number of independent
degrees of freedom is $16-2\times4-2=6$, which matches the dimension
of the orbit.

The Lagrangian simplifies when the second-class constraints are
enforced to vanish strongly, $\chi_1 = \chi_2 = 0$. By further
applying the gauge-fixing conditions $\tau = t$ and
$\boldsymbol{x}^T = \bm{0}$, and solving some of the first-class
constraints, we obtain:
\begin{align}
  L\left[\boldsymbol{x},\tilde{\varphi},\boldsymbol{D},\boldsymbol{\beta}\right]=d_{0}x^{L}\dot{\tilde{\varphi}}+\boldsymbol{D}\cdot\dot{\boldsymbol{\beta}}-\eta_{1}\left(\left\Vert \boldsymbol{D}\right\Vert ^{2}-d_{0}^{2}\right),
\end{align}
where $x^{L}:=\frac{\boldsymbol{D}}{d_0}\cdot\boldsymbol{x}$.

In the context of the Bargmann group interpretation, this orbit is
discussed in Section~8.3.4 of \cite{Figueroa-OFarrill:2024ocf} and
corresponds to a massless orbit describing a galilean particle that
does not evolve in time. The orbit is defined instantaneously at a
fixed value of the physical time $t$. In contrast, the planon case has
a non-trivial time evolution.

\subsubsection{Orbit \#8' (dipole with momentum)}
\label{sec:orbit-8-dipole}

Let us consider the following representative:
\begin{align}
  \alpha=\sM\left(0,0,\boldsymbol{d},\boldsymbol{p},\boldsymbol{0},E_0,0\right),
\end{align}
where $\left\Vert \boldsymbol{d}\right\Vert ^{2}=d_{0}^{2}$ and $\boldsymbol{d}\cdot\boldsymbol{p}=0$, with
$\left\Vert \boldsymbol{d}\times\boldsymbol{p}\right\Vert =d_{0}p_{0}>0$.

The Lagrangian is given by
\begin{align}
  L\left[R\left(\boldsymbol{\phi}\right),\boldsymbol{x},\boldsymbol{\beta},\tilde{\varphi},t\right]
  =\left(R\boldsymbol{d}\right)\cdot\boldsymbol{x}\dot{\tilde{\varphi}}+\left(R\boldsymbol{d}\right)\cdot\dot{\boldsymbol{\beta}}+R\boldsymbol{p}\cdot\dot{\boldsymbol{x}}-E_{0}\dot{t} \, .
\end{align}
It is convenient to introduce the following variables:
\begin{align}
  \boldsymbol{D}:=R\boldsymbol{d}\,\qquad\qquad\boldsymbol{\pi}:=R\boldsymbol{p}\,\qquad\qquad\tilde{Q}:=\boldsymbol{D}\cdot\boldsymbol{x},
\end{align}
The Lagrangian in Hamiltonian form can then be written as 
\begin{align}
  L\left[E,t,\tilde{Q},\tilde{\varphi},\boldsymbol{D},\boldsymbol{\beta},\boldsymbol{\pi},\boldsymbol{x},N,\eta_{1},\eta_{2},\eta_{3},\eta_{4}\right]
  =&-E\dot{t}+\tilde{Q}\dot{\tilde{\varphi}}+\boldsymbol{D}\cdot\dot{\boldsymbol{\beta}}+\boldsymbol{\pi}\cdot\dot{\boldsymbol{x}}-\eta_{1}\mathcal{C}_{1}-\eta_{2}\mathcal{C}_{2}  \nonumber \\
&-\eta_{3}\mathcal{C}_{3}-\eta_{4}\mathcal{C}_{4}
-N \mathcal C_5\,,
\end{align}
with the following constraints
\begin{align}
    \mathcal{C}_1&:=\tilde{Q}-\boldsymbol{D}\cdot\boldsymbol{x}\approx 0 \\
    \mathcal{C}_2&:=\left\Vert \boldsymbol{D}\right\Vert ^{2}-d_{0}^{2}\approx 0\\
    \mathcal{C}_{3}&:=\left\Vert \boldsymbol{\pi}\right\Vert ^{2}-p_{0}^{2}\approx 0 \\
    \mathcal{C}_{4}&:=\boldsymbol{D}\cdot\boldsymbol{\pi}\approx 0 \\
    \mathcal{C}_{5}&:=E-E_0\approx 0 \, .
\end{align}
Note that the constraints $\mathcal{C}_1$ and $\mathcal{C}_4$ are of second-class. Indeed, the obey the following Poisson bracket:
\begin{align}
  \left\{\mathcal{C}_{1},\mathcal{C}_{4}\right\} =-d_{0}^{2}\neq 0 \, .
\end{align}
The preservation in time of these second-class constraints implies that the corresponding Lagrange multipliers must satisfy
\begin{align}
  \eta_1=\eta_4=0 \, .
\end{align}
Consequently, the equations of motion become
\begin{align}
    \dot{t}&=-N &  \dot{E}&=0 \\
    \dot{\tilde{\varphi}}&=0 & \dot{\tilde{Q}}&=0  \\
    \dot{\boldsymbol{\beta}}&=2\eta_{2}\boldsymbol{D} &  \dot{\boldsymbol{D}}&=0 \\
    \dot{\boldsymbol{x}}&=2\eta_{3}\boldsymbol{\pi} & \dot{\boldsymbol{\pi}}&=0\,\label{eq:Pos}.
\end{align}

An immediate consequence to note is that
\begin{align}
  \boldsymbol{D}\cdot\dot{\boldsymbol{x}}=0 \, .
\end{align}
Therefore, the elementary dipole cannot move in the direction of its
dipole moment. It behaves as a planon.

The second-class constraints $\mathcal{C}_1=0$ and $\mathcal{C}_4=0$
can be solved by decomposing $\boldsymbol{\pi}$ into its longitudinal
and transverse components relative to the dipole moment. Consequently,
one can write
\begin{align}
  \tilde{Q}=d_{0}x^{L}\qquad\qquad\qquad \pi^{L}=0 \, ,
\end{align}
where $x^{L}:=\frac{\boldsymbol{D}}{d_0}\cdot\boldsymbol{x}$ and
$\pi^{L}:=\frac{\boldsymbol{D}}{d_0}\cdot\boldsymbol{\pi}$. If, in
addition, the gauge condition $\tau = t$ is imposed, the Lagrangian
can be expressed as
\begin{align*}
L\left[\tilde{\varphi},\boldsymbol{D},\boldsymbol{\beta},\boldsymbol{\pi},\boldsymbol{x},\eta_{2},\eta_{3}\right]=&d_{0}x^{L}\dot{\tilde{\varphi}}+\boldsymbol{D}\cdot\dot{\boldsymbol{\beta}}+\boldsymbol{\pi}^{T}\cdot\dot{\boldsymbol{x}}-\eta_{2}\left(\left\Vert \boldsymbol{D}\right\Vert ^{2}-d_{0}^{2}\right)-\eta_{3}\left(\left\Vert \boldsymbol{\pi}\right\Vert ^{2}-p_{0}^{2}\right) \,  .
\end{align*}
This orbit was analysed in the context of the Bargmann group in
Section 8.3.5 of \cite{Figueroa-OFarrill:2024ocf} and corresponds
to a massless representation without time evolution.

Let us study in detail the motion in the plane orthogonal to the dipole moment. Denoting its coordinates by $x$ and $y$, and assuming that $\pi_{x},\pi_{y}\neq 0$, from Eq.~\eqref{eq:Pos} it follows that
\begin{equation}
    \frac{\dot{x}}{\pi_{x}}=\frac{\dot{y}}{\pi_{y}}.
\end{equation}
Therefore, integrating over time, one obtains the following trajectory:
\begin{equation}
    y=\frac{\pi_y}{\pi_x}x+\bar{y},
\end{equation}
where $\bar{y}$ is a constant. Note that the above equation is gauge invariant, as it does not depend on $\eta_3$, and it shows that the particle moves in a straight line within the plane orthogonal to the dipole moment, which is consistent with the result obtained in Section \ref{sec:particle-dynamics}. On the other hand, when either $\pi_x$ or $\pi_{y}$ vanishes, the trajectory is also a straight line, aligned with the direction of the non-vanishing momentum.

\section{Composite dipoles}
\label{sec:composite-dipoles}

In this section, we construct a Lagrangian that describes the dynamics
of a non-elementary dipoles\footnote{To avoid confusion let us note
  that the composite dipoles we discuss are not elementary systems of
  the planon group. They are ideal or point electric dipoles which can
  be thought of as limits of the physical dipoles where the distance
  goes to zero and the charge to infinity such that the dipole moment
  stays finite (see, e.g.,~\cite{griffiths1999introduction}).} that
naturally couple to fracton traceless gauge
field~\cite{Pretko:2016lgv}, analyse their symmetries and first
quantise them.

First, we derive the Lagrangian for the dipole based solely on the
symmetries of the problem. The action has planon symmetries, but also
transversal galilean and longitudinal carrollian symmetries. We also
show how to derive the dynamics of these non-elementary dipoles by
assembling elementary monopoles (excitations of orbit \#6) by using an
appropriate interaction potential. These dipoles can be coupled to a
fracton traceless gauge field~\cite{Pretko:2016lgv} and we thereby
recover the generalised force law of Pretko from a variational
symmetry.

We also first quantise the resulting system resulting in a gaussian
Schrödinger-like field theory that was already discussed in the
context of the fracton/elasticity
duality~\cite{Kumar_2019,Pretko:2019omh}. We also show that the free
theory has again the mixed Carroll--Galilei symmetries.

\subsection{Dynamics of non-elementary dipoles from symmetries}
\label{sec:dynam-non-elem}

The charge density of a dipole with position
$\boldsymbol{z}(t)$ and dipole moment $\boldsymbol{d}$
takes the form
\begin{align}
  \label{eq:charge_density_dipole}
  \rho(t,\bm{x})&=-d^{i}\frac{\pd}{\pd x^{i}}\delta\left(\boldsymbol{x}-\boldsymbol{z}(t)\right) &
  J_{ij}(t,\bm{x})&=-d_{(i}\dot{z}_{j)}\delta\left(\boldsymbol{x}-\boldsymbol{z}(t)\right).
\end{align}
As discussed in Section~\ref{sec:introduction} the generators
$\boldsymbol{D}=\bm{d}$ and $Z=\bm{d}\cdot \zb$ depend exclusively on
the canonical variables $\boldsymbol{d}$ and $\boldsymbol{z}$. As a
result, the symmetries only act on their corresponding canonical
conjugates $\boldsymbol{\sigma}$ and $\boldsymbol{\pi}$. Using the
following canonical Poisson brackets
\begin{align}
  \label{eq:dip-poisson}
  \{ z^{i},\pi_{j}\} &=\delta^i_j & \{ d^{i},\sigma_{j}\} &=\delta^i_j \, ,
\end{align}
it can be shown that the transformation laws under quadrupole transformations
with parameter $\epsilon$ and dipole transformations with parameter
$\boldsymbol{\alpha}$ are given by
\begin{subequations}
  \label{eq:dipdpsym}
  \begin{align}
    \delta\boldsymbol{z} & =\bm{0} & \delta\boldsymbol{\pi}&=-\epsilon\boldsymbol{d}\label{eq:transf1}\\
    \delta\boldsymbol{d} & =\bm{0} & \delta\boldsymbol{\sigma}&=-\epsilon\boldsymbol{z}-\boldsymbol{\alpha} \,  .\label{eq:transf2}
  \end{align}
\end{subequations}
It is useful to decompose the momentum
$\boldsymbol{\pi}=\boldsymbol{\pi}^{T}+\boldsymbol{\pi}^{L}$ into its
transverse and longitudinal components relative to the dipole moment
$\boldsymbol{d}$
\begin{align}
  \pi_{i}^{T}&=P_{ij}\pi^{j} & \pi_{i}^{L}& = P_{ij} ^L \pi^j \,,
\end{align}
where we used the transverse and longitudinal projectors
\begin{align}
  \label{eq:proj}
  P_{ij}&=\delta_{ij}-\hat{d}_{i}\hat{d}_{j} & P_{ij}^{L} =\hat{d}_{i}\hat{d}_{j}   \, ,
\end{align}
with $\hat{d}_{i}$ being the unit vector in the direction of the
dipole moment $\hat{\boldsymbol{d}}=\boldsymbol{d}/d$, and $d$ is its
magnitude. They are orthogonal $P_{ij}P_{jk}^{L}= 0$.  In particular,
it is evident that only the longitudinal component of the momentum
transforms under \eqref{eq:transf1}
\begin{align}
  \delta\boldsymbol{\pi}^{T}& =\bm{0} & \delta\boldsymbol{\pi}^{L}&=-\epsilon\boldsymbol{d} \, .
\end{align}
Thus, to construct a Lagrangian for a free dipole that remains
invariant under planon symmetry, one may choose a Hamiltonian
proportional to $(\boldsymbol{\pi}^{T})^{2}=\pib^{2} - (\hat{\bm{d}}\cdot \pib)^{2}$. This choice yields the
following Lagrangian (in Hamiltonian form)
\begin{equation}
  \label{eq:Ldip}
  L_{\text{dip}}\left[\boldsymbol{z},\boldsymbol{\pi},\boldsymbol{\sigma},\boldsymbol{d}\right]=\pi_{i}\dot{z}^{i}+\sigma_{i}\dot{d}^{i}-\frac{1}{2m}P_{ij}\pi^{i}\pi^{j} \,  .
\end{equation}
The corresponding equations of motion are then given by
\begin{subequations}
  \begin{align}
    \dot{z}^{i} & =\frac{1}{m}P^{ij}\pi_{j}\label{eq:EOMb1}\\
    \dot{\pi}_{i} & =0\label{eq:EOMb2}\\
    \dot{d}^{i} & =0\label{eq:EOMb3}\\
    \dot{\sigma}_{i} & =\frac{\hat{\boldsymbol{d}}\cdot\boldsymbol{\pi}}{md}P_{ij}\pi^{j} \, .\label{eq:EOMb4}
  \end{align}  
\end{subequations}
Note that contracting equation \eqref{eq:EOMb1} with $d_{i}$ yields
\begin{align}
  \boldsymbol{d}\cdot\dot{\boldsymbol{z}}=0 \, ,
\end{align}
which is precisely the condition that ensures the conservation of the
trace of the quadrupole moment, cf.~\eqref{eq:mob_rest_dipoles}, and defines
the property of being a planon. Let us also emphasise that the
conservation of the dipole moment, $\dot{\bm{d}}=\bm{0}$, is a
consequence of the action principle and not put in by
hand.
The (im)mobility is particularly transparent when we also decompose the
position of the dipole into transversal and longitudinal components
$\zb = \zb^{T} + \zb^{L}$ and use equations \eqref{eq:EOMb1} and
\eqref{eq:EOMb2} to obtain
\begin{align}
  \label{eq:simpl}
  \ddot{\zb}^{T} &=\bm{0} & \dot{\zb}^{L} &=\bm{0} \, .
\end{align}
We see that the dipole can move freely transversal to the dipole
moment, but the longitudinal component is fixed (see
Figure~\ref{fig:planon}).

\subsection{Composite dipole from the interaction of two elementary monopoles}
\label{sec:comp-dipole-from}

In this section, we shall derive the Lagrangian for the composite
dipole \eqref{eq:Ldip} by considering the interaction of two
elementary monopoles described by
the orbit \#6. The interaction term is selected to ensure the
invariance of the Lagrangian under the diagonal subgroup of the direct
sum of the planon algebras associated with each elementary particle.

Let us consider the Lagrangian for two elementary particles of the
orbit \#6 with opposite electric and quadrupole charges. According
to \eqref{eq:monopole_Lag}, it can be written as
\begin{equation}
  L_{\text{dip}}=\boldsymbol{\pi}_{1}\cdot\dot{\boldsymbol{x}}_{1}+\boldsymbol{\pi}_{2}\cdot\dot{\boldsymbol{x}}_{2}-V\, .
\end{equation}
Here, $V$ represents the potential that enables the dipole to exist
as a bound state of two elementary particles. 

According to \eqref{eq:mom_mon}, the canonical momenta are defined as
\begin{align}
  \boldsymbol{\pi}_{1}&=q_{0}\left(\boldsymbol{x}_{1}\tilde{\phi}_{1}+\boldsymbol{b}_{1}\right) & \boldsymbol{\pi}_{2}&=-q_{0}\left(\boldsymbol{x}_{2}\tilde{\phi}_{2}+\boldsymbol{b}_{2}\right)\, .
\end{align}
In particular, under dipole transformations with parameters
$\boldsymbol{\alpha}_{1},\boldsymbol{\alpha}_{2}$ and quadrupole
transformations with parameters
$\epsilon_{1},\epsilon_{2}$, the canonical variables
transform according to
\begin{subequations}
  \begin{align}
    \delta\boldsymbol{x}_{1} & =0\qquad\qquad\delta\boldsymbol{\pi}_{1}=q_{0}\left(\boldsymbol{\alpha}_{1}+\epsilon_{1}\boldsymbol{x}_{1}\right)\\
    \delta\boldsymbol{x}_{2} & =0\qquad\qquad\delta\boldsymbol{\pi}_{2}=-q_{0}\left(\boldsymbol{\alpha}_{2}+\epsilon_{2}\boldsymbol{x}_{2}\right)\, .
  \end{align}
\end{subequations}
It is convenient to re-express the fields in the Lagrangian in terms
of the following quantities:
\begin{subequations}
  \begin{align}
    \boldsymbol{z} & :=\frac{1}{2}\left(\boldsymbol{x}_{1}+\boldsymbol{x}_{2}\right) & \boldsymbol{\pi}&:=\boldsymbol{\pi}_{1}+\boldsymbol{\pi}_{2}\\
    \boldsymbol{d} & :=q_{0}\left(\boldsymbol{x}_{1}-\boldsymbol{x}_{2}\right) & \boldsymbol{\sigma}&:=\frac{1}{2q_{0}}\left(\boldsymbol{\pi}_{1}-\boldsymbol{\pi}_{2}\right)\, .
  \end{align}
\end{subequations}
Here, $\boldsymbol{z}$ denotes the average position of the two
elementary particles, which describes the position of the dipole;
$\boldsymbol{\pi}$ represents the total momentum; and $\boldsymbol{d}$
is the dipole moment of the system. The variable
$\boldsymbol{\sigma}$, conjugate to $\boldsymbol{d}$, corresponds to
the momentum difference between the two elementary particles. A pure
dipole corresponds to the limit where the distance between the
elementary particles approaches zero
(\(\left\Vert\boldsymbol{x}_{1}-\boldsymbol{x}_{2}\right\Vert
\rightarrow 0\)) while their electric charge tends to infinity
(\(q_0 \rightarrow \infty\)), such that the product remains constant
in this limit.

The Lagrangian can then be rewritten as 
\begin{equation}
  L_{\text{dip}}=\boldsymbol{\pi}\cdot\dot{\boldsymbol{z}}+\boldsymbol{\sigma}\cdot\dot{\boldsymbol{d}}-V \,.
\end{equation}
Note that under the diagonal subgroup, with parameters $\epsilon=-\frac{1}{2}\left(\epsilon_{1}+\epsilon_{2}\right)$
for the quadrupole transformations, and $\boldsymbol{\alpha}=-\frac{1}{2}\left(\boldsymbol{\alpha}_{1}+\boldsymbol{\alpha}_{2}\right)$
for the dipole transformations, the fields transform as follow
\begin{subequations}
  \label{eq:planon-diag}
  \begin{align}
    \delta\boldsymbol{z} & =0 & \delta\boldsymbol{\pi}&=-\epsilon\boldsymbol{d}\\
    \delta\boldsymbol{d} & =0 & \delta\boldsymbol{\sigma}&=-\epsilon\boldsymbol{z}-\boldsymbol{\alpha} \, .
  \end{align}
\end{subequations}
These are precisely the transformations laws for the composite dipole
that were derived in \eqref{eq:dipdpsym}.

The potential must be selected to ensure the invariance under the
diagonal planon subgroup, in particular under the
transformations~\eqref{eq:planon-diag}.\footnote{This is analog to,
  e.g., the gravitational potential between two massive galilean
  particles, where the non-interacting particles are independently
  translation invariant $\delta x_{1,2}=a_{1,2}$, but the potential
  $V ( \| x_1-x_2 \| )$ is only invariant under translations by the
  same parameter $\delta x_{1,2}=a$.} Thus, it can be defined as
\begin{equation}
  V=\frac{1}{2m}P_{ij}\pi^{i}\pi^{j}.
\end{equation}
This term precisely corresponds to that appearing in the Lagrangian of
the composite dipole in \eqref{eq:Ldip}. In terms of the original
variables of the elementary excitations, it is expressed as follows
\begin{equation}
  \label{eq:Pot2}
  V=\frac{1}{2m}\left[\delta_{ij}-\frac{\left(x_{1}^{i}-x_{2}^{i}\right)\left(x_{1}^{j}-x_{2}^{j}\right)}{\left\Vert \boldsymbol{x}_{1}-\boldsymbol{x}_{2}\right\Vert ^{2}}\right]\left(\pi_{1}^{i}+\pi_{2}^{i}\right)\left(\pi_{1}^{j}+\pi_{2}^{j}\right).
\end{equation}

It is important to emphasise that the presence of the potential is
essential for enabling the dipole to move. In the absence of this
potential, the system would only describe two immobile decoupled monopoles.
The non-trivial dynamics, and in particular the planeon property of
the composite system, is a direct consequence of the presence of this
particular potential.

\subsection{Symmetries of the composite dipole}
\label{sec:symm-comp-dipole}

The dipole Lagrangian~\eqref{eq:Ldip} has the following symmetries and
corresponding Noether charges
\begin{subequations}
  \label{eq:global-sym}
  \begin{align}
    \delta X_{i} &= \omega_{ij}X_{j}  & \bm{J}&=  \zb\times \pib + \bm{d} \times  \bm{\sigma} \\
    \delta z_{i} &= c_{i} &  \bm{P} &=\pib \\
    \delta \sigma_{i} &=-\alpha_{i} & \bm{D} &=\bm{d} \\
    \delta \pi_{i} &= -\epsilon d_{i} \, , \delta\sigma_{i} = - \epsilon z_{i}  &  Z&=\bm{d} \cdot \bm{z} \\
    \delta X_{i}&= \lambda \dot X_{i} &   H&=\frac{1}{2m} (\bm{\pi}^{T})^{2} \, ,
  \end{align}
\end{subequations}
where $\omega_{ij} = -\omega_{ji}$ and $X_{i}$ denotes all canonical
variables. Using the Poisson brackets~\eqref{eq:dip-poisson} leads to
the expected symmetry algebra \eqref{eq:planonalg}
\begin{subequations}
  \label{eq:planon-poisson}
  \begin{align}
    \{ J_{i} ,J_{j}\} &= - \varepsilon_{ijk}J_{k} \\
    \{ J_{i},P_{j}\}&=\varepsilon_{ijk} P_{k} \\
    \{ J_{i},D_{j}\}&=\varepsilon_{ijk} D_{k} \\
    \{P_{i},Z\}&=-D_{i} \, .
  \end{align}
\end{subequations}
In particular
\begin{align}
      \{P_{i},D_{j}\}&=0  & \{H,Z\} &=0
\end{align}
where the first relation shows that the charge of the dipole is indeed
zero and the second that this system does not make use of the
possible central extension \eqref{eq:new-central-bracket}. Let us
emphasise that the rotations and the corresponding conserved angular
momentum are of the whole system.

Indeed due to the second term in the Hamiltonian
\begin{align}
 H=\frac{1}{2m} \left[
  \pib^{2} - (\hat{\bm{d}}\cdot \pib)^{2}
  \right]
\end{align}
there is a distinguished rotation symmetry
$(\zb, \pib,\bm{d}) \mapsto (R\zb, R \pib,R\bm{d})$ which preserves
the dipole moment $R \hat{\bm{d}}=\hat{\bm{d}}$.
This symmetry leads to the conserved longitudinal component of the
angular momentum $L_{i}=P_{ij}^{L}J_{j}$. Additionally the dipoles
also have a galilean boost symmetry in the transverse direction with a
conserved center of mass, where both are explicitly given by
\begin{align}
  \label{eq:boost}
  \bm{L}&=\bm{z}^{T} \times \pib^{T} &
                                       \bm{K} &= m \zb^{T} -  t \pib^{T} \,.
\end{align}
Using these symmetries we find that the transverse part leads to a
Bargmann-like central extension of the transversal Galilei algebra
\begin{subequations}
  \label{eq:gal-effective}
  \begin{align}
    \{L_{i},P_{j} \} &= \varepsilon_{ikl} P_{kj}P_{l}^{T}\\
    \{L_{i},K_{j} \} &= \varepsilon_{ikl} P_{kj}K_{l}\\
    \{K_{i},P_{j} \} &= m P_{ij} \\
    \{K_{i},H \} &= P^{T}_{i} \, .
  \end{align}
\end{subequations}
where the projector $P_{ij}$ takes the role of the usual Kronecker
delta. This agrees with the intuition that the dipole can move like a
massive galilean particle in the plane transverse to the dipole
moment. Due to the presence of the projectors this is a Poisson, but
no Lie algebra.

In the longitudinal direction there is no rotation symmetry, but
\begin{align}
  \bm{C} &= m \zb^{L} & \bm{\sigma}^{L}
\end{align}
are conserved. In particular the first quantity can be seen as a
carrollian center of mass leading to a Carroll-like algebra
\begin{subequations}
  \begin{align}
    \{C_{i},P_{j} \} &= m P^{L}_{ij} \\
    \{C_{i},H \} &= 0  \\
    \{D_{i},\sigma^{L}_{j} \} &= P_{ij}^{L} \, .
  \end{align}
\end{subequations}
This agrees with the intuition that the dipole is unable to move in
the longitudinal direction.

Let us remark that there is also an action of the rotations $J_{i}$ on
the longitudinal and transverse charges, but as is geometrically clear
they do not leave their respective subspaces invariant, e.g.,
$\{J_{i},z_{j}^{T}\} = \varepsilon_{ikl}P_{jl}z_{k}$ and
$\{J_{i},z_{j}^{L}\} = \varepsilon_{ikl}P^{L}_{jl}z_{k}$.

\subsubsection{Reduced action}
\label{sec:reduced-action}

We can construct a reduced, but inequivalent, action principle by
solving the equation of motion for the dipole moment
$\dot \db=\bm{0}$, i.e., fixing the dipole moment to a constant
value. This leads to $\db$ being a background value which we do not
vary. The decomposition into longitudinal and transversal parts can
then be done without using the equations of motion, i.e.,
off-shell. We can then integrate out $\pib^{T}$ to obtain
\begin{equation}
  \label{eq:Ldipother}
  L_{\bm{d}}\left[\zb^{T},\zb^{L},\pib^{L}\right]=\frac{1}{2} m \Vert \dot \zb^{T}\Vert^{2} + \pib^{L} \cdot \dot{\zb}^{L} \, ,
\end{equation}
which leads directly to~\eqref{eq:simpl}. This Lagrangian is the sum
of a transverse galilean and a longitudinal carrollian particle
action, thus manifesting the mixed galilean and carrollian symmetries
we discussed in Section~\ref{sec:symm-comp-dipole}.

However, the dipole conservation is not a consequence of the
variational principle and the symmetries~\eqref{eq:global-sym}, in
particular global rotation symmetries are absent.

\subsection{Coupling of the dipole to a traceless fracton gauge field}
\label{sec:coupl-dipole-trac}

Let us consider the coupling of the dipole~\eqref{eq:Ldip} to the
fracton traceless gauge field $\left(\phi,A_{ij}\right)$ introduced
in~\cite{Pretko:2016lgv}. The interaction is described by the
following Lagrangian
\begin{equation}
  L_{\text{int}}=\int dt d^{3}x\,\left(\phi\rho-A_{ij}J^{ij}\right) \, .
\end{equation}
Here $\phi$ represents the scalar potential, while $A_{ij}$ is the
symmetric and traceless tensor potential. The charge and current
densities of the dipole are given by:
\begin{align}
  \rho\left(t,\boldsymbol{x}\right)&=-d^{i}\partial_{i}\delta^{3}\left(\boldsymbol{x}-\boldsymbol{z}\left(t\right)\right) &  J^{ij}&=-d^{\left\langle i\right.}\dot{z}^{\left.j\right\rangle }\delta^{3}\left(\boldsymbol{x}-\boldsymbol{z}\left(t\right)\right) \, .
\end{align}
In these expressions, the brackets $\left\langle \cdots\right\rangle $
denote the symmetric and traceless part of a rank two tensor, i.e.,
$X_{\left\langle ij\right\rangle
}=\frac{1}{2}\left(X_{ij}+X_{ji}\right)-\frac{1}{3}\delta_{ij}X^k{}_k$.
We consider only the traceless part of $J^{ij}$ because this is the
only part that couples to the traceless $A_{ij}$.

The total Lagrangian of the system is then given by
\begin{equation}
L=\pi_{i}\dot{z}^{i}+\sigma_{i}\dot{d}^{i}-\frac{1}{2m}P_{ij}\pi^{i}\pi^{j}+d^{i}\partial_{i}\phi(t,\boldsymbol{z}(t))+A_{ij}(t,\boldsymbol{z}(t))d^{\left\langle i\right.}\dot{z}^{\left.j\right\rangle } \, ,\label{eq:Ltot}
\end{equation}
which leads to the following equations of motion:
\begin{subequations}
  \label{eq:eom-coupled}
  \begin{align}
    \dot{z}^{i} & =\frac{1}{m}P^{ij}\pi_{j}\label{eq:EOMGF1}\\
    \dot{\pi}_{i} & = d^{j}\left(E_{ij}+\frac{1}{3}\delta_{ij}\partial^{2}\phi\right)+\left(F_{ijk} + \delta_{k[i} \partial^l A_{j]l}\right)\dot{z}^{j}d^{k}\label{eq:EOMGF2}\\
    \dot{d}^{i} & =0 \label{eq:EOMGF3}\\
    \dot{\sigma}_{i} &= \frac{\hat{\bm{d}}\cdot \pib }{md} P_{ij}\pi^{j}+\partial_{i}\phi+A_{ij}\dot{z}^{j} \, .\label{eq:EOMGF4}
  \end{align}
\end{subequations}
Here,
\begin{subequations}
\begin{align}
    E_{ij} &:= \partial_{i}\partial_{j}\phi-\frac{1}{3}\delta_{ij}\partial^{2}\phi - \dot{A}_{ij} \\
    F_{ijk} &:= \partial_{i}A_{jk}-\partial_{j}A_{ik} - \frac12 \delta_{ik} \partial^l A_{jl} + \frac12 \delta_{jk} \partial^l A_{il}
\end{align}
\end{subequations}
denote the
electric and magnetic fields, respectively, which are gauge-invariant quantities.

Remarkably, by taking the time derivative of \eqref{eq:EOMGF1} and
using \eqref{eq:EOMGF2} and \eqref{eq:EOMGF3}, one obtains
\begin{equation}
  m\ddot{z}^{i}= P^{ij}d^{k}\left(E_{jk}+F_{jlk}\dot{z}^{l}\right) ,
\end{equation}
which corresponds precisely to the Lorentz force found by
Pretko~\cite{Pretko:2016lgv} (where however it did not derive from an
action principle). Like in the previous reference, one can also
introduce the dualised magnetic field
$B_{ij} := \frac12 \varepsilon_{i}{}^{pq} F_{pqj} \,\Leftrightarrow\,
F_{ijk} = \varepsilon_{ij}{}^p B_{pk}$ which, like the electric field
$E_{ij}$ is also gauge-invariant, symmetric and
traceless.\footnote{Our definition for the magnetic field
  $B^\text{here}_{ij} = \varepsilon_{i}{}^{pq} (\partial_p A_{qj} -
  \frac12 \delta_{jp} \partial^k A_{qk})$ differs slightly from
  Pretko's magnetic field
  $B^\text{there}_{ij} = \varepsilon_{i}{}^{pq} \partial_p A_{qj}$,
  which is not symmetric. This has no bearing on the expression of the
  Lorentz force however, since the extra terms vanish by virtue of the
  equations of motion.} Using this new quantity, the Lorentz force
reads
\begin{equation}
    m \ddot{z}^{i}= P^{ij}d^{k}\left(E_{jk} + \varepsilon_{jl}{}^p \dot{z}^{l} B_{pk}\right) = d \left( \boldsymbol E_{eff} + \dot{\boldsymbol z} \times \boldsymbol B_{eff} \right)^i .
\end{equation}
where $E_{eff}^i = P^{ij} \hat d^k E_{jk}$ and
$B_{eff}^i = P^{ij} \hat d^k B_{jk}$.  Due to the presence of the
projector $P^{ij}$, the velocity can be non-vanishing only in the
direction transverse to the dipole moment. Indeed, contracting
\eqref{eq:EOMGF1} with $d^{i}$ yields $\dot{\boldsymbol{z}}^{L}=0$.

Let us emphasise that the Lagrangian \eqref{eq:Ltot} and
the equations of motion \eqref{eq:eom-coupled} are invariant under the
following gauge transformations\footnote{While checking the gauge
  invariance, the following identity is useful
  $\frac{d}{dt} \Lambda(t,\boldsymbol z) = \partial_t
  \Lambda(t,\boldsymbol z) + \dot z^i \partial_i \Lambda(t,\boldsymbol
  z)$.}
\begin{align}
  \label{eq:gauge-tr}
  \delta_{\Lambda}\phi&=\partial_{t}\Lambda &
   \delta_{\Lambda}A_{ij}&=\partial_{i}\partial_{j}\Lambda-\frac{1}{3}\delta_{ij}\partial^{2}\Lambda                                                                   
& \delta_{\Lambda}\pi^{i}&=\frac{1}{3}d^{i}\partial^{2}\Lambda & \delta_{\Lambda}\sigma_{i}&=\partial_{i}\Lambda \, .
\end{align}
In particular, by selecting
$\Lambda=-\frac{1}{2}\epsilon z_{j}z^{j}-\alpha_{j}z^{j}$, one obtains
\begin{align}
  \delta_{\Lambda}\pi^{i}& =-\epsilon d^{i} & \delta_{\Lambda}\sigma_{i}& =-\epsilon z_{i}-\alpha_{i} \, ,
\end{align}
which are precisely the transformations of $\boldsymbol{\pi}$ and
$\boldsymbol{\sigma}$ under quadrupole and dipole transformations.
Hence, the global symmetries \eqref{eq:dipdpsym} can be regarded as
``large gauge transformations.''

\subsection{First quantisation of the composite dipole with fracton
  gauge field}
\label{sec:first-quantisation}

In this subsection, we shall perform the first quantisation of the
composite dipole interacting with an external fracton gauge field.  To
this end, we shall formulate the Lagrangian \eqref{eq:Ltot} in
canonical form, making the invariance under arbitrary time
reparametrisations explicit. Following Dirac's approach, we shall then
impose the condition that the wave function be annihilated by the
constraint generating time reparametrisations, with the momenta
substituted by their corresponding quantum operators.

Since the interaction term depends explicitly on the velocity of the
dipole, the field $\pi_{i}$ is no longer the canonical conjugate to
the position.  Instead, it is given as
\begin{equation}
  p_{i}:=\pi_{i}+d^{j}A_{ij} \, ,\label{eq:momentum_shift}
\end{equation}
which still satisfies the canonical Poisson bracket $\left\{ z^{i},p_{j}\right\} =\delta^i_j$.
This situation is similar to that of a particle coupled to an external
electromagnetic field. 

In terms of $p_{i}$, the Lagrangian \eqref{eq:Ltot}
reads
\begin{equation}
  \label{eq:Ltot-2}
L=p_{i}\dot{z}^{i}+\sigma_{i}\dot{d}^{i}-\frac{1}{2m}P^{ij}\left(p_{i}-d^{k}A_{ik}\right)\left(p_{j}-d^{l}A_{il}\right)+d^{i}\partial_{i}\phi\, .
\end{equation}

To express the action in a form that is manifestly invariant under
time reparametrisations, we must treat the time $t$ as a canonical
variable and introduce a conjugate momentum $p_{t}$, such that
$\left\{ t,p_{t}\right\} =-1$.  Then, we can write
\begin{equation}
  \label{eq:Ltot-1-1}
  L=-p_{t}\frac{d}{d\tau}t+p_{i}\frac{d}{d\tau}z^{i}+\sigma_{i}\frac{d}{d\tau}d^{i}-N\left(p_{t}+d^{i}\partial_{i}\phi-\frac{1}{2m}P^{ij}\left(p_{i}-d^{k}A_{ik}\right)\left(p_{j}-d^{l}A_{il}\right)\right) \, ,
\end{equation}
where $\tau$ denotes the proper time parameter, and $N$ is the Lagrange
multiplier that enforces the constraint
\begin{equation}
  \label{eq:constDip}
  p_{t}+d^{i}\partial_{i}\phi-\frac{1}{2m}P^{ij}\left(p_{i}-d^{k}A_{ik}\right)\left(p_{j}-d^{l}A_{il}\right)\approx 0 \, ,
\end{equation}
which generates arbitrary time reparametrisations.

The wave equation is obtained by requiring that the constraint
annihilates the wave function
$\psi\left(t,\boldsymbol{z},\boldsymbol{d}\right)$, where the momenta
are replaced by the quantum operators
\begin{align}
\label{eq:quant_cond}
  p_{t}&=i\frac{\partial}{\partial t} & p_{i}&=-i\frac{\partial}{\partial z^{i}} & \sigma_{i}&=-i\frac{\partial}{\partial d^{i}}\,.
\end{align}
Thus, one obtains the following Schrödinger-like wave equation
\begin{equation}
  \left[i\frac{\partial}{\partial t}+d^{i}\partial_{i}\phi-\frac{1}{2m}P^{ij}\left(i\frac{\partial}{\partial z^{i}}+d^{k}A_{ik}\right)\left(i\frac{\partial}{\partial z^{j}}+d^{l}A_{jl}\right)\right]\psi\left(t,\boldsymbol{z},\boldsymbol{d}\right)=0 \, . \label{eq:Gen_Schr}
\end{equation}
Note that, since the field $\boldsymbol{\sigma}$ does not appear
explicitly in the constraint (\ref{eq:constDip}), there are no
derivatives with respect to the dipole moment in the wave
equation. Consequently, the dipole moment $\boldsymbol{d}$ can be
treated as a fixed parameter within the wave equation.

The above equation of motion arises from variation of the following Lagrangian density
\begin{subequations}
    \label{eq:first-full-act}
  \begin{align}
    \mathcal{L}[\psi,\psi^{*}] &=i\psi^{*} \mathcal D_{t}\psi + \frac{1}{2m} \psi^* P^{ij} \mathcal{D}_{i} \mathcal{D}_{j}\psi \\
        &=i\psi^{*}\partial_{t}\psi+\frac{1}{2m}\psi^{*}P^{ij}\partial_{i}\partial_{j}\psi+d^{i}\psi^{*}\psi\partial_{i}\phi+\frac{id^{j}}{2m}P^{ik}\left(\psi\partial_{k}\psi^{*}-\psi^{*}\partial_{k}\psi\right)A_{ij} \nonumber \\
        & \quad -\frac{1}{2m}d^{k}d^{l}P^{ij}A_{ik}A_{jl}\psi^{*}\psi \, ,
  \end{align}
\end{subequations}
where we introduced the covariant derivatives 
\begin{align}
  \label{eq:cov-der}
\mathcal{D}_t&:= \partial_t - i d^i \partial_i\phi &  \mathcal{D}_{i}&:= \partial_{i} - i A_{ik}d^{k} \, ,
\end{align}
with $\partial_{t}:=\partial/\partial t$ and
$\partial_{i}:=\partial/\partial z^{i}$.

This Lagrangian is invariant under gauge transformations of the form
\begin{align}
\delta_{\Lambda}\phi&=\partial_{t}\Lambda & \delta_{\Lambda}A_{ij}&=\partial_{i}\partial_{j}\Lambda-\frac{1}{3}\delta_{ij}\partial^{2}\Lambda & \delta_{\Lambda}\psi&=id^{i}\left(\partial_{i}\Lambda\right)\psi \,.
\end{align}
In particular, the covariant derivatives transform as
\begin{align}
  \delta_\Lambda (\mathcal D_t \psi) &= i d^j \left(\partial_j \Lambda \right) \mathcal D_t \psi &
  \delta_{\Lambda}(\mathcal{D}_{i}\psi) &=
\left[ id^{j}\left(\partial_{j}\Lambda\right)\mathcal{D}_{i} - \frac{1}{3}
d_{i}\pd^{2}\Lambda \right]\psi   \,.
\end{align}
Additionally, it is invariant under a
global $U\left(1\right)$ transformation $\delta\psi=i\alpha\psi$,
which is associated with the electric charge.

In the specific case of large gauge transformations of the form
$\Lambda=-\frac{1}{2}\epsilon\left\Vert \boldsymbol{z}\right\Vert
^{2}-\boldsymbol{\alpha}\cdot\boldsymbol{z}$, where $\epsilon$ is the
parameter of the trace charge and $\boldsymbol{\alpha}$ represents the
parameter for dipole transformations, the wave function transforms as
follows:
\begin{equation}
  \label{eq:diptracetransf}
  \delta\psi=-i\boldsymbol{d}\cdot\left(\boldsymbol{\alpha}+\epsilon\boldsymbol{z}\right)\psi \, .
\end{equation}
The free Schrödinger-like Lagrangian
\begin{equation}
  \label{eq:Lfree}
  \mathcal{L}_{\text{free}}=i\psi^{*}\partial_{t}\psi+\frac{1}{2m}\psi^{*}P^{ij}\partial_{i}\partial_{j}\psi \,  ,
\end{equation}
is invariant under the global transformations in
\eqref{eq:diptracetransf}.

The dispersion relation associated with the free Lagrangian
\eqref{eq:Lfree} takes the form
\begin{equation}
    \omega=\frac{1}{2m}P^{ij}k_{i}k_{j}\,.
\end{equation}
Here, $\omega$ represents the frequency, and $k_i$ denotes the
momentum of the excitation. It is important to note that the
components of the momentum $\boldsymbol{k}$ in the direction of the
dipole moment $\boldsymbol{\hat{d}}$ do not contribute to the energy.

In $2+1$ dimensions, the Lagrangian density \eqref{eq:first-full-act}
was previously derived in \cite{Kumar_2019,Pretko:2019omh} (see
also~\cite[Section III.B.3]{Gromov:2022cxa}), in the context of the
study of the fracton/elasticity duality. Their derivation of the
Lagrangian followed a different approach from ours, where the
effective dipole dynamics was obtained directly from a field theory,
rather than through the quantisation of the classical particle-like
model for a dipole as discussed in
Section~\ref{sec:coupl-dipole-trac}.

\subsection{Symmetries of the free planon field theory}
\label{sec:summ-plan-field}

Let us determine the global conserved charges of the Lagrangian
\eqref{eq:Lfree} that describes a free quantum dipole. The Noether
current associated with the global transformations
\eqref{eq:diptracetransf} is given by
\begin{align}
  j^{0}&= \boldsymbol{d}\cdot \left(\boldsymbol{\alpha}+\epsilon\boldsymbol{z}\right)\psi^{*}\psi & j^{i}&=-\frac{i}{2m}\boldsymbol{d}\cdot\left(\boldsymbol{\alpha}+\epsilon\boldsymbol{z}\right)P^{ij}\left(\psi^{*}\partial_{j}\psi-\psi\partial_{j}\psi^{*}\right) \, , 
\end{align}
where the current satisfies the continuity equation
$\partial_{t}j^{0}+\partial_{i}j^{i}=0$.  As a consequence, the
corresponding conserved charges are given by
\begin{align}
  Q\left[\boldsymbol{\alpha},\epsilon\right]&=\int d^{3}\boldsymbol{z}  j^{0}=\boldsymbol{\alpha}\cdot\boldsymbol{D}+\epsilon\,Z\,,
\end{align}
where the dipole moment $\boldsymbol{D}$ and the trace charge $Z$ are
given by
\begin{align}
  \label{eq:DZ}
\boldsymbol{D}&=\boldsymbol{d}\int d^{3}\boldsymbol{z}\,\psi^{*}\psi=\boldsymbol{d} &  Z&=\int d^{3}\boldsymbol{z}\,\left(\boldsymbol{d}\cdot\boldsymbol{z}\right)\psi^{*}\psi\,.
\end{align}
Note that we have applied the standard normalisation condition for the
wave function $\int d^{3}\boldsymbol{z}\,\psi^{*}\psi=1$.

A natural question is whether these conserved quantities can be
derived from a continuity equation of the form typically found in
fracton theories. Specifically, for a planon symmetry, one expects an
equation of the form
\begin{equation}
    \partial_{t}\rho+\partial_{i}\partial_{j}J^{ij}=0,
\end{equation}
where $\rho$ represents the electric charge density, and $J^{ij}$ is a
symmetric, traceless tensor describing the dipole current.
The answer is positive for a charge and current density of the form
\begin{equation}
    \rho=-d^{i}\partial_{i}\left(\psi^{*}\psi\right)\qquad\qquad\qquad J^{ij}=\frac{i}{2m}d^{(i}P^{j)k}\left(\psi^{*}\partial_{k}\psi-\psi\partial_{k}\psi^{*}\right).
\end{equation}
Note that this expression for the charge density resembles that of a
classical dipole, except that the $\delta$-function specifying the
position of the classical dipole is replaced by the probability
density of the quantum dipole.
In particular, the total electric charge vanishes
\begin{equation}
    Q=\int d^{3}\boldsymbol{z}\,\rho=-\int d^{3}\boldsymbol{z} \,\partial_{j}\left(d^{j}\psi^{*}\psi\right)=0\,,
\end{equation}
which is consistent with the fact that a dipole is electrically neutral. 

Additionally, this quantum system retains the mixed
Carroll--Galilei boosts present in its classical counterpart
discussed in Section \ref{sec:symm-comp-dipole}. However, a key
difference arises: the commutator between a longitudinal Carroll boost
and a transverse galilean boost no longer vanishes.

Under infinitesimal transverse Galilei boosts of the form
$\delta x^{i}=-P^{ij}v_{j}t$ the wave function transforms according to
\begin{align}
  \delta\psi=P^{ij}v_{i}\left(t\partial_{j}\psi-imz_{j}\psi\right),
\end{align}
where the second term corresponds to a change in the phase.  Applying
Noether's theorem, the transformation above yields the following
conserved quantity associated with galilean boosts
\begin{equation}
    K_{i}=\int d^{3}\boldsymbol{z}\,iP^{ij}\left(t\psi^{*}\partial_{j}\psi-imz_{j}\psi^{*}\psi\right).
\end{equation}

Under infinitesimal longitudinal Carroll boosts of the form
$\delta t=-P_{L}^{ij}v_{i}z_{j},$
the wave function transforms according to 
\begin{align}
  \delta\psi  =P_{L}^{ij}v_{i}z_{j}\dot{\psi} \, .
\end{align}
Therefore, the Noether charge corresponding to longitudinal Carroll boosts is given by
\begin{equation}
    C_{i}=\int d^{3}\boldsymbol{z}\,\frac{1}{2m}P_{L}^{ij}P^{kl}z_{j}\left(\partial_{k}\psi^{*}\right)\left(\partial_{l}\psi\right).
\end{equation}
Using the Poisson bracket $\left\{ \psi\left(z\right),\psi^{*}\left(z'\right)\right\} =\frac{1}{i}\delta\left(z-z'\right)$, one finds
\begin{align}
  \left\{ K_{i},C_{j}\right\} =X_{ij},
\end{align}
where 
\begin{equation}
  X_{ij}:=-iP^{il}P_{L}^{jm}\int d^{3}\boldsymbol{z}\,z_{m}\psi^{*}\partial{}_{l}\psi,
\end{equation}
defines a new conserved quantity.

On the other hand, the generators associated with spacetime translations take the expected form 
\begin{align}
H&=-\int d^{3}\boldsymbol{z}\,\frac{1}{2m}P^{ij}\psi^{*}\partial_{i}\partial_{j}\psi & P_{i}&=-\int d^{3}\boldsymbol{z}\,i\psi^{*}\partial_{i}\psi \, .
\end{align}

The case of spatial rotations differs from the particle case studied
in Section \ref{sec:symm-comp-dipole}, as the dipole moment $d_i$ is
now fixed. In other words, it is treated as a background field that is
not varied in the action, whereas in the particle case it corresponds
to a canonical variable.  As a result, its presence breaks the full
rotational symmetry down to the subgroup that leaves the dipole
direction invariant, i.e., those satisfying
$R \hat{\bm{d}} = \hat{\bm{d}}$.  Consequently, the generator of
rotations around $\hat{\bm{d}}$ is given by
\begin{align}
  L_{i}=-iP_{L}^{il}\varepsilon_{ljk}\int d^{3}\boldsymbol{z}\,z^{j}\psi^{*}\partial_{k}\psi\,.
\end{align}

To restore rotational symmetry around all axes, it is necessary to
allow for the rotation of the entire system, including the dipole
moment.  One possibility is to consider $\bm{d}$ as an additional
variable over which we integrate the Lagrangian
\eqref{eq:first-full-act}, i.e., the action is
$\int\mathcal{L}[\psi,\psi^{*}] dt d^{3}\bm{z} d^{3}\bm{d} $. Indeed,
according to the quantisation conditions \eqref{eq:quant_cond} and
\eqref{eq:Gen_Schr} the wavefunction should depend on all of these
coordinates. In this extended configuration space the dipole moment
will transform under rotations and rotational symmetry is
restored. This is reminiscent of systems with spin, where we sum over
all spins, but for the dipoles it is an integral rather than a sum. It
is an interesting perspective to consider on this extended space, but
henceforth we will again work with fixed $\bm{d}$.

Similar to the standard Schrödinger equation, this model also
possesses a dilatation symmetry
\begin{equation}
  \delta\psi=\lambda_{1}\left(2t\partial_{t}\psi+z_{k}\partial_{k}\psi+\frac{3}{2}\psi\right),
\end{equation}
whose charge is given by 
\begin{equation}
  \Delta_{1}=\int d^{3}\boldsymbol{z}\,\left(-\frac{t}{m}P^{ij}\psi^{*}\partial_{i}\partial_{j}\psi+iz_{k}\psi^{*}\partial_{k}\psi+\frac{3i}{2}\psi^{*}\psi\right).
\end{equation}
Interestingly, this system also possesses a second type of dilatation symmetry
\begin{equation}
  \delta\psi=\lambda_{2}\left(2t\partial_{t}\psi+P^{kl}z_{k}\partial_{l}\psi+\psi\right),
\end{equation}
with a conserved charge given by 
\begin{equation}
  \Delta_{2}=\int d^{3}\boldsymbol{z}\,\left[-\frac{t}{m}P^{ij}\psi^{*}\partial_{i}\partial_{j}\psi+iP^{kl}z_{k}\psi^{*}\partial_{l}\psi+i\psi^{*}\psi\right]\,.
\end{equation}

This does not constitute a complete analysis of all possible
symmetries of the Lagrangian \eqref{eq:Lfree}. However, the preceding
results already illustrate the richness of this class of models. Once
the system is coupled to a gauge field, most of these symmetries are
broken. In particular, transverse galilean and longitudinal Carroll
boosts no longer define symmetries of the fractonic gauge field
theory, due to its inherently aristotelian structure.

\section{Quantum planon particles}
\label{sec:quant-plan-part}

The next stage in the study of the planon group is the construction of
its unitary irreducible (projective) representations, which we may
identify with the elementary quantum particles of the planon group.
These are again honest unitary irreducible representations of the
centrally extended planon group.  As we saw in
Section~\ref{sec:planon-particles}, this group is a semidirect product
$G = B \ltimes A$, with $B$ the Bargmann group and $A$ the two-dimensional
abelian group with Lie algebra $\a = \left<H,W\right>$.  In principle,
we may apply the method of Mackey to classify its UIRs, as was used
for fractons in \cite{Figueroa-OFarrill:2023qty}, for instance,
\emph{provided} that the semidirect product is regular (see, e.g.,
\cite[Ch.~17.1]{MR0495836}).  This is proved in
Appendix~\ref{sec:regul-centr-extend}.

Since $A$ is abelian, its unitary irreducible representations are
one-dimensional and the unitary dual $\widehat A \cong \RR^2$,
associating with $(E,c) \in \RR^2$ the character $\chi : A \to \U(1)$
defined by $\chi(\exp(t H + w W)) = e^{i (Et + cw)}$.  The action of
$B$ on $A$ induces an action on $\widehat A$.   Only the generator $Z$
acts nontrivially and we see that
\begin{equation}
  \exp(\widetilde\varphi Z) \cdot (E,c) = (E + c \widetilde\varphi, c),
\end{equation}
following on from the coadjoint action $\ad^*_Z H = -W$.  The unitary
dual $\widehat A$ decomposes under the action of $B$ into the
following orbits:
\begin{itemize}
\item point-like orbits $\left\{(E,0) \right\}$ for every $E \in
  \RR$;
\item and one-dimensional orbits $\RR \times \{c\}$ for every $c
  \neq 0$.
\end{itemize}
The point-like orbits are stabilised by the full Bargmann group $B$,
whereas a point $(E,c)$, with $c\neq 0$, in a one-dimensional orbit
is stabilised by the subgroup whose Lie algebra is the span of
$L_{ab},P_a, D_a, Q$ and is isomorphic to the Carroll algebra.
The action of $G$ on the one-dimensional orbits is by translations,
and therefore the standard Lebesgue measure on the real line is
invariant.

We therefore simply induce UIRs of $G$ from characters $\chi \in
\widehat A$ and a UIR of its stabiliser subgroup $G_\chi$.  We must
distinguish between the two kinds of characters.

\subsection{UIRs for characters in point-like orbits }
\label{sec:char-point-like}

For $\chi$ a character in a point-like orbit, $G_\chi = B$ and hence the
corresponding UIRs of $G$ are in bijective correspondence with the
UIRs of the Bargmann group, which can be read off from
\cite[Table~6]{Figueroa-OFarrill:2024ocf} (for $n=3$, which we tacitly
assume), after translating the notation from Bargmann to planon
language.  There are five classes of such UIRs, which we now
summarise.  We use the notation $g := \sg(\varphi, \tilde\varphi,
\bbeta, \ba, R,s,w) = \sfb(\varphi, \tilde\varphi, \bbeta, \ba, R) \exp(s H + w W)$.

\subsubsection{$\emph{\Romannum{1}} (\ell,\tilde q, E)$}
\label{sec:I-ell-tilde-q-E}

These are finite-dimensional UIRs with Hilbert space $\eH = V_\ell$,
the complex spin-$\ell$ UIR of $\Spin(3)$.  We let $\rho_\ell$ denote
the representation map $\rho_\ell : \Spin(3) \to \GL(V_\ell)$.  The
representation $U : G \to U(\eH)$ is given by
\begin{equation}
  U(g) \psi = e^{i s E} e^{i \tilde q \tilde \varphi} \rho_\ell(R) \psi.
\end{equation}

\subsubsection{$\emph{\Romannum{2}} (\ell, q, \tilde q, E)$}
\label{sec:II-ell-q-tilde-q-E}

Here the Hilbert space is $\eH = L^2(\RR^3, V_\ell)$ relative to the inner
product
\begin{equation}
  (\psi_1, \psi_2) = \int_{\RR^3} d\mu(\bd) \left<\psi_1(\bd), \psi_2(\bd)\right>_{V_\ell},
\end{equation}
with $d\mu(\bd)$ the standard euclidean measure on $\RR^3$.  The 
representation $U : G \to U(\eH)$ is given by
\begin{equation}
  \left( U(g) \psi \right)(\bd) = e^{i s E} e^{i q\varphi} e^{i \tilde q \tilde \varphi} e^{i \bbeta \cdot \bd} \rho_\ell(R) \psi(R^{-1}(\bd + q \ba)).
\end{equation}

\subsubsection{$\emph{\Romannum{3}} (n, p, \tilde q, E)$}
\label{sec:III-n-p-tilde-q-E}

Here the Hilbert space is $\eH = L^2(S^2,\scrO(-n))$, which we think of as
complex-valued smooth (i.e., not necessarily holomorphic) functions
$\psi(z)$ of a stereographic coordinate $z$ for $S^2$, which are
square-integrable relative to the inner product
\begin{equation}
  (\psi_1, \psi_2) = \int_{\CC} \frac{2i dz \wedge d\bar  z}{(1+|z|^2)^2} \overline{\psi_1(z)} \psi_2(z).
\end{equation}
The representation $U : G \to U(\eH)$ is given by
\begin{equation}
  \left( U(g) \psi \right)(z) = e^{i s E} e^{i \tilde q \tilde \varphi} e^{i \ba \cdot \bpi(z)} \left( \frac{\eta + \bar\xi  z}{|\eta + \bar\xi z|} \right)^{-n} \psi\left(\frac{\bar\eta z - \xi}{\eta + \bar \xi z} \right),
\end{equation}
where
\begin{equation}
  \label{eq:R-in-Spin-3}
  R =
  \begin{pmatrix}
    \eta & \xi \\ -\bar\xi & \bar\eta
  \end{pmatrix}
\end{equation}
and
\begin{equation}
  \label{eq:pi-z}
  \bpi(z) = \frac{p}{1+|z|^2} \left( 2 \Re(z), 2 \Im(z), |z|^2-1\right).
\end{equation}

\subsubsection{$\emph{\Romannum{4}} (n,d,E)$}
\label{sec:IV-n-d-E}

Here the Hilbert space is $\eH = L^2(\RR \times S^2, \scrO(-n))$,
where the complex line bundle $\scrO(-n)$ on $S^2$ has been pulled
back to $\RR \times S^2$ via the cartesian projection $\RR \times S^2
\to S^2$.  As in the previous case we think of these as complex-valued
smooth functions $\psi(u,z)$, where $z$ is a stereographic coordinate for
$S^2$, which are square-integrable relative to the inner product
\begin{equation}
  (\psi_1, \psi_2) = \int_{\RR \times \CC} \frac{2i du \wedge dz \wedge d\bar z}{(1+|z|^2)^2} \overline{\psi_1(u,z)}\psi_2(u,z).
\end{equation}
The representation $U : G \to U(\eH)$ is given by
\begin{equation}
  \left( U(g) \psi\right)(u,z) = e^{i s E} e^{-i u \tilde \varphi} e^{-i \bbeta \cdot \bdelta(z)} \left( \frac{\eta + \bar\xi
      z}{|\eta + \bar\xi z|} \right)^{-n} \psi\left(u - \ba \cdot
    \bdelta (z), \frac{\bar\eta z - \xi}{\eta + \bar \xi z} \right),
\end{equation}
where $R$ is as in equation~\eqref{eq:R-in-Spin-3} and
\begin{equation}
  \label{eq:delta-z}
  \bdelta(z) = \frac{d}{1+|z|^2} \left(2 \Re(z), 2 \Im(z), |z|^2-1  \right).
\end{equation}

\subsubsection{$\emph{\Romannum{5}}_\pm (d,p^\perp,E)$}
\label{sec:V-pm-d-p-perp-E}

Here the Hilbert space is $\eH = \Pi_{\pm} L^2(\RR \times S^3,\CC)$,
where the projectors $\Pi_\pm$ are given by
\begin{equation}
  \left( \Pi_\pm \psi \right)(u,S) = \tfrac12 \left( \psi(u,S)  \pm \psi(u,-S)\right)
\end{equation}
where $S \mapsto -S$ sends a point on $S^3$ to its antipodal point or,
equivalently, if we identify $S^3$ with $\SU(2)$, then this is just
multiplying the matrix $S$ by $-1$.  These functions are
square-integrable relative to the inner product
\begin{equation}
  (\psi_1, \psi_2) = \int_{\RR \times S^3} du \wedge d\mu(S) \overline{\psi_1(u,S)} \psi_2(u,S),
\end{equation}
with $d\mu(S)$ a bi-invariant volume form on $\SU(2)$.
The representation $U : G \to U(\eH)$ is given by
\begin{equation}
  \left( U(g)\psi \right)(u,S) = e^{i s E} e^{i\ba \cdot S \p} e^{i \bbeta \cdot S \bd} e^{i(\tilde\varphi + u) \ba \cdot S \bd} \psi(u + \tilde \varphi, R^{-1}S),
\end{equation}
where the action of $S \in \SU(2)$ on vectors is via the covering homomorphism $\SU(2) \to \SO(3)$.

\subsection{UIRs for characters in one-dimensional orbits}
\label{sec:char-one-dimens}

For $\chi$ a character in a one-dimensional orbit, we obtain UIRs as
square-integrable functions from the real line (the one-dimensional
orbit of $\chi$) with values in a UIR of the Carroll group, which for
$n=3$ can be read off from \cite[Table~5]{Figueroa-OFarrill:2023qty}.
Such characters are determined by a nonzero $c \in \RR$ and we can
choose as an orbit representative the character $\chi$ corresponding
to $(0,c)$; that is,
\begin{equation}
  \chi(\exp(s H + w W)) = e^{i c w}.
\end{equation}
We must choose $\sigma : \RR \to G$ such that $\sigma(E) \cdot (0,c) =
(E,c)$ and we take $\sigma(E) = \exp(\frac{E}{c} Z)$, which is
well-defined since $c \neq 0$.  The generic group element can be
written as
\begin{equation}
  g := \sg(\varphi, \tilde\varphi, \bbeta, \ba, R,s,w) =
  e^{\tilde\varphi Z} \sfc(\varphi, \bbeta, \ba, R) \exp(s H + w W),
\end{equation}
where $\sfc(\varphi, \bbeta, \ba, R) = e^{\varphi Q} e^{\bbeta \cdot
  \bD} e^{\ba \cdot \bP} R$ is a general element of the Carroll
subgroup stabilising $\chi$.  Let $G_\chi = C \ltimes A$ denote the
stabiliser subgroup of $\chi$, where $C$ is the Carroll subgroup and
$A$ the abelian subgroup generated by $H,W$.  Let $\eH$ be a Carroll
UIR twisted by the UIR of $A$ defined by the character $\chi$ and let
$F : G \to \eH$ be a $G_\chi$-equivariant (Mackey) function; that is,
\begin{equation}
  F(g h ) = h^{-1} \cdot F(g) \qquad\forall g \in G, h \in G_\chi.
\end{equation}
We define the corresponding section $\psi$ of the (trivial) vector
bundle $\RR \times \eH$ over the real line by $\psi(E) =
F(\sigma(E))$, which transforms under $g \in G$ as
\begin{equation}
  (g \cdot \psi)(E) = F(g^{-1}\sigma(E)).
\end{equation}
A calculation shows that
\begin{multline}
  g^{-1} \sigma(E) = \sigma(E - c \tilde\varphi) \sfc(\ba \cdot \bbeta
  - \varphi - \tfrac12 (\tfrac{E}{c}-\tilde\varphi) \|\ba\|^2,
  -R^{-1}(\bbeta - (\tfrac{E}{c}-\tilde\varphi) \ba), -R^{-1} \ba,
  R^{-1})\\
  \times e^{- s H} e^{-(w + t(\frac{E}{c}-\tilde\varphi))W}.
\end{multline}
Inserting this into $F$ and using the $G_\chi$-equivariance, we arrive
at
\begin{equation}
  (g \cdot \psi)(E) = e^{i(cw + s (E - c \tilde\varphi))}
  U(\sfc(\varphi - \tfrac12 (\tfrac{E}{c}-\tilde\varphi)\|\ba\|^2,
  \bbeta - (\tfrac{E}{c}-\tilde\varphi)\ba, \ba, R)) \cdot \psi(E - c \tilde\varphi),
\end{equation}
which is unitary relative to the inner product
\begin{equation}
  (\psi_1, \psi_2) = \int_\RR dE \left<\psi_1(E),\psi_2(E)\right>_{\eH},
\end{equation}
where $\left<-,-\right>_{\eH}$ is the inner product on the UIR $\eH$.
It is now a simple matter to go through the Carroll UIRs in
\cite[§3.4]{Figueroa-OFarrill:2023qty} and translate the explicit
representations of the Carroll group to the planon language.  As was
the case already with the Carroll and Bargmann groups, some of the
induced representations of $G$ thus constructed appear at face value
to be carried by sections of infinite-rank Hilbert bundles over the
character orbits.  But in the same way that it was possible in the
Carroll and Bargmann cases to view such representations as carried by
sections of finite-rank vector bundles over an auxiliary space
fibering over the character orbit, we can do the same here.

\subsubsection{$\emph{\Romannum{6}}(\ell,c)$}
\label{sec:VI-ell-c}

These are induced from the finite-dimensional representations of the
Carroll group in which the rotations act on the complex spin-$\ell$
representation $V_\ell$ and the other generators act trivially.  Hence
$\eH = L^2(\RR,V_\ell)$ relative to the inner product
\begin{equation}
  (\psi_1,\psi_2) = \int_{\RR} dE \left<\psi_1(E),\psi_2(E)\right>_{V_\ell}
\end{equation}
and the generic element $g := \sg(\varphi, \tilde\varphi,
\bbeta, \ba, R,s,w) \in G$ acts on $\eH$ as
\begin{equation}
  (U(g) \psi)(E) =e^{i(cw + s (E - c\tilde\varphi))} \rho_\ell(R) \psi(E - c\tilde\varphi).
\end{equation}

\subsubsection{$\emph{\Romannum{7}}(\ell,q,c)$}
\label{sec:VII-ell-q-c}

The inducing representation is carried by $L^2(\RR^3,V_\ell)$ relative
to the inner product
\begin{equation}
  (\psi_1,\psi_2) = \int_{\RR^3} d\mu(\bd) \left<\psi_1(\bd),\psi_2(\bd)\right>_{V_\ell},
\end{equation}
with $d\mu(\bd)$ the standard euclidean volume form on $\RR^3$.  This
means that the induced representation can be reformulated as carried
by $\eH = L^2(\RR^4,V_\ell)$ relative to the inner product
\begin{equation}
  (\psi_1,\psi_2) = \int_{\RR^4} dE \wedge d\mu(\bd) \left<\psi_1(E,\bd),\psi_2(E,\bd)\right>_{V_\ell},
\end{equation}
with the generic group element $g \in G$ acting on $\eH$ via
\begin{multline}
  (U(g) \psi)(E,\bd) = e^{i(cw + s(E - c \tilde\varphi) + q(\varphi-\tfrac1{2c}(E-c\tilde\varphi)\|\ba\|^2) + (\bbeta - \tfrac1c(E-c\tilde\varphi)\ba)\cdot \bd)} \\
 \times \rho_\ell(R) \psi(E - c\tilde\varphi,R^{-1}(\bd + q \ba)).
\end{multline}

\subsubsection{$\emph{\Romannum{8}}(n,d,c)$}
\label{sec:VIII-n-d-c}

The inducing representation is carried by $L^2(S^2, \scrO(-n))$ which
we describe as in Section~\ref{sec:IV-n-d-E} in terms of smooth
functions of a stereographic coordinate $z$.  We may
reformulate this UIR as carried by $L^2(\RR \times S^2, \scrO(-n))$,
where $\scrO(-n)$ has been pulled-back to $\RR \times S^2$ via the
cartesian projection, relative to the inner product
\begin{equation}
  \label{eq:IP-on-L2-R-S2}
  \left(\psi_1,\psi_2\right) = \int_{\RR \times \CC} \frac{dE \wedge
    2i dz \wedge d \bar z}{(1+|z|^2)^2} \overline{\psi_1(E,z)} \psi_2(E,z)
\end{equation}
and the action of the generic element $g \in G$ is given by
\begin{multline}
  (U(g)\psi)(E,z) = e^{i(c w + s( E - c \tilde\varphi))} e^{i (\bbeta - \frac1c (E - c \tilde\varphi)\ba) \cdot \bdelta(z)}\\
  \times \left( \frac{\eta + \bar\xi z}{|\eta + \bar\xi z|} \right)^{-n} \psi\left( E-c \tilde\varphi, \frac{\bar\eta z - \xi}{\eta + \bar \xi z}\right),
\end{multline}
with $R$ as in equation~\eqref{eq:R-in-Spin-3} and $\bdelta(z)$ as in
equation~\eqref{eq:delta-z}.

\subsubsection{$\emph{\Romannum{9}}(n,p,c)$}
\label{sec:IX-n-p-c}

This is very similar to the above case, with the inducing UIRs related
by a Carroll automorphism.  We reformulate the UIR again as carried by
$L^2(\RR \times S^2, \scrO(-n))$, relative to the inner product in
equation~\eqref{eq:IP-on-L2-R-S2} and now the action of the generic
element $g \in G$ is given by
\begin{equation}
  (U(g)\psi)(E,z) = e^{i(c w + s ( E - c \tilde\varphi))} e^{i \ba \cdot \bpi(z)}\left( \frac{\eta + \bar\xi z}{|\eta + \bar\xi z|} \right)^{-n} \psi\left( E-c \tilde\varphi, \frac{\bar\eta z - \xi}{\eta + \bar \xi z}\right),
\end{equation}
with $R$ as in equation~\eqref{eq:R-in-Spin-3} and $\bpi(z)$ as in
equation~\eqref{eq:pi-z}.

\subsubsection{$\emph{\Romannum{10}}_\pm(n,d,p,c)$}
\label{sec:X-n-d-p-c}

This representation is carried by the same Hilbert space as in the
previous two cases: namely, $L^2(\RR \times S^2, \scrO(-n))$ relative
to the usual inner product given by
equation~\eqref{eq:IP-on-L2-R-S2}.  In these UIRs, the vectors $\p$
and $\bd$ are either parallel (corresponding to the $+$ sign) or
anti-parallel (corresponding to the $-$ sign).  The action of the
generic $g \in G$ is given by
\begin{multline}
  (U(g)\psi)(E,z) = e^{i(c w + s (E - c \tilde\varphi))} e^{i(\bbeta - \frac1c(E - c\tilde\varphi) \ba)\cdot \bdelta(z) \pm \ba \cdot \bpi(z))} \\
  \times \left( \frac{\eta + \bar\xi z}{|\eta + \bar\xi z|} \right)^{-n} \psi\left( E-c \tilde\varphi, \frac{\bar\eta z - \xi}{\eta + \bar \xi z}\right),
\end{multline}
where $R$ as in equation~\eqref{eq:R-in-Spin-3}, $\bpi(z)$ as in
equation~\eqref{eq:pi-z} and $\bdelta(z)$ as in equation~\eqref{eq:delta-z}.

\subsubsection{$\emph{\Romannum{11}}_\pm(d,p,\theta,c)$}
\label{sec:XI-n-d-p-theta-c}

Finally, we have the representation for which $\bd = (0,0,d)$
and $\p = p(\sin\theta,0,\cos\theta)$.  This is carried by
$L^2_\pm(\RR \times S^3,\CC)$, where $L^2_\pm(\RR \times S^3,\CC)
\subset L^2(\RR \times S^3,\CC)$ is the closed subspace consisting of
square-integrable functions $\psi$ such that $\psi(E,-S) = \pm
\psi(E,S)$, where $E \in \RR$ and $S \in S^3$, but thought of as a
matrix in $\SU(2)$.  The inner product is given by
\begin{equation}
  (\psi_1,\psi_2) = \int_{\RR \times S^3} dE \wedge d\mu(S)
  \overline{\psi_1(E,S)} \psi_2(E,S),
\end{equation}
where $d\mu(S)$ is a bi-invariant volume form on $\SU(2)$ or,
equivalently, the volume form corresponding to a round metric on
$S^3$.  The action of the generic element $g \in G$ is given by
\begin{equation}
  (U(g)\psi)(E,S) = e^{i(c w + s (E - c \tilde\varphi))} e^{i\left(\left(\bbeta -\frac1c (E - c \tilde\varphi) \ba\right)\cdot S \bd + \ba \cdot S \p\right)} \psi(E-c\tilde\varphi,R^{-1}S).
\end{equation}

\subsection{Summary}
\label{sec:summary}

It follows from the fact that the extended planon group $G = B \ltimes
A$ is a regular semi-direct product (see
Appendix~\ref{sec:regul-centr-extend}) that all UIRs are obtained via
the Mackey method and, since $A$ is abelian, Rawnsley's theorem 
\cite{MR387499} says that they can all be constructed by quantising
coadjoint orbits.  As usual there are a couple of caveats: not every
coadjoint orbit is quantisable (e.g., there are coadjoint orbits whose
angular momentum/helicity is not quantised) and a given coadjoint
orbit can have inequivalent quantisations (this happens, in
particular, when the stabiliser of the orbit is not connected).  Short
of actually quantising the coadjoint orbits, any correspondence
between UIRs and coadjoint orbits remains conjectural and
Table~\ref{tab:orbits-UIRs} illustrates our attempt at such a
correspondence.

\begin{table}[h!]
  \centering
    \caption{Coadjoint orbits and UIRs of the centrally extended planon group $G$}
    \label{tab:orbits-UIRs}
  \begin{adjustbox}{max width=\textwidth}
    \begin{tabular}{*{2}{>{$}l<{$}}}
      \multicolumn{1}{c}{UIR Label} & \multicolumn{1}{c}{Orbit} \\
      \toprule\rowcolor{blue!7}
      \Romanbar{I}(0,\tilde q, E) & 0 (\tilde q, E) \\
      \Romanbar{I}(\ell, \tilde q, E) & 2 (\ell,\tilde q, E) \\\rowcolor{blue!7}
      \Romanbar{II}(0,q,\tilde q, E) & 6 (q,\tilde q, E) \\
      \Romanbar{II}(\ell,q,\tilde q, E) & 8 (\ell, q, \tilde q, E) \\\rowcolor{blue!7}
      \Romanbar{III}(n,p,\tilde q, E) & 4 (h=n, p, \tilde q, E)  \\
      \Romanbar{IV}(n,d,E) & 6'(h=n,d,E)  \\\rowcolor{blue!7}
      \Romanbar{V}_\pm(d,p^\perp, E) & 8'(d,p^\perp,E) \\
      \midrule
      \Romanbar{VI}(0,c) & 2'(c) \\\rowcolor{blue!7}
      \Romanbar{VI}(\ell,c) & 4'(\ell, c) \\
      \Romanbar{VII}(0,q,c) & 8'''(q,c) \\\rowcolor{blue!7}
      \Romanbar{VII}(\ell,q,c) & 10(\ell, q, c) \\
      \Romanbar{VIII}(n,d,c) & 6_0'''(h=n,d,c) \\\rowcolor{blue!7}
      \Romanbar{IX}(n,p,c) & 6''(h=n,p,c) \\
      \Romanbar{X}_\pm (n,d,p,c) & 6'''_\pm(h=n, d, p, c) \\\rowcolor{blue!7}
      \Romanbar{XI}_\pm(d,p,\theta,c)& 8''(d,p,\theta,c) \\
      \bottomrule
    \end{tabular}
  \end{adjustbox}
  \vspace{1em}
  \caption*{The table lists for every UIR of the extended planon group
    $G = B \ltimes A$ the coadjoint orbit from which it can be
    obtained by geometric quantisation.  The horizontal line separates
    those UIRs which are essentially UIRs of Bargmann (above the line)
    from the rest.  The orbits of type $8'''$ and $10$ have $q=c$, but
    that was a choice.  It is in fact the ratio $\lambda = q/c$ which
    was set to $1$ via a rescaling of generators.  To make more
    transparent contact with the UIRs we should keep the ratio
    arbitrary (but nonzero).  The parameters $\ell$ and $c$, when they
    appear are assumed to be nonzero and $2\ell$ is a non-negative
    integer.  The parameter $n$ is an integer.  The two UIRs
    $\Romanbar{V}_\pm(d,p^\perp, E)$ and
    $\Romanbar{XI}_\pm(d,p,\theta,c)$ have signs which are not
    apparent in the corresponding coadjoint orbits.  This is due to
    the fact that the coadjoint orbits have stabilisers with two
    connected components and hence they admit two inequivalent
    quantisations.}
\end{table}

\section{Discussion}
\label{sec:discussion}

In this work, we have analysed both elementary and composite systems
exhibiting planon symmetry. We provided a classification of classical
and quantum elementary particles by characterising the coadjoint
orbits and unitary irreducible representations of the (extended) planon
group. Planon symmetries are closely related to Bargmann
symmetries~\cite{Gromov:2018nbv}, but their elementary systems are
mapped into each other in quite a nontrivial way. For example the
monopole, that is stuck to a point, corresponds to the massive
galilean particle, which moves along straight lines, while the
unrestricted dipoles correspond to massless galilean particles. In
this sense, we provide a physically interesting setup for these often
ignored galilean orbits.

We also present an action for composite dipole particles that
naturally couples to Pretko’s traceless scalar charge theory,
see~\eqref{eq:Ltot}. Remarkably, both the free particle and its
first-quantised theory exhibit a mixed Carroll–Galilei symmetry, which
is fully consistent with their allowed (im)mobility (see
Figure~\ref{fig:planon}).

Finally, we classify the planon elementary quantum systems, i.e., the
unitary irreducible representations of the extended planon group and
set up a correspondence between the UIRs and the coadjoint orbits.

There are various interesting avenues for further exploration:
\begin{description}
\item[Other multipole symmetries and particles] The tools we have
  developed here and
  in~\cite{Figueroa-OFarrill:2023vbj,Figueroa-OFarrill:2023qty,Perez:2023uwt}
  can be generalised to other multipole
  symmetries~\cite{Gromov:2018nbv} and their associated particles. A possible
  next step is to analyse lineons and vector charge
  theories~\cite{Pretko:2016lgv} (see also
  \cite{Bertolini:2022ijb,Baig:2023yaz,Kasikci:2023tvs,Ahmadi-Jahmani:2025iqc,Bergshoeff:2025qtt}).

\item[Central extension] In Section~\ref{sec:centr-extens-plan} we
  showed that the planon group admits a nontrivial central
  extension. In the context of this work, this leads to energy that is
  unbounded from below. However, it is possible that alternative
  interpretations could give rise to interesting new
  physics.\footnote{There are even simpler systems, such as
    aristotelian spacetimes with one scalar
    charge~\cite{Figueroa-OFarrill:2022kcd}, for which it would be
    interesting to find a physical realization.}
  
\item[Elementary versus composite] We discussed two types of
  particles: elementary particles of the planon group and composite
  dipoles constructed from elementary monopoles. While the composite
  dipoles naturally couple to the traceless scalar charge gauge
  theory, it remains an open question whether the elementary dipoles
  in~\eqref{eq:action-functional-general} also couple to a gauge
  theory.

  A related question is whether the composite dipoles can be regarded
  as elementary systems of other symmetries, maybe such as those discussed
  in Section~\ref{sec:composite-dipoles}.
  
\item[Soft theorems and the infrared triangle] The scalar tensor gauge
  theory~\cite{Perez:2022kax,Perez:2023uwt} exhibits an intricate
  interplay between memory effects, soft theorems, and asymptotic
  symmetries, collectively forming an infrared
  triangle~\cite{Strominger:2017zoo}.

  Following~\cite{Perez:2023uwt}, one could leverage the coupling
  between particles and the gauge field, \eqref{eq:Ltot}, along with
  the field theory framework~\eqref{eq:first-full-act}, to compute
  memory effects and soft theorems. Notably, the study of these
  phenomena in $2+1$ dimensions appears particularly promising due to
  potential applications in elasticity and vortex dynamics (see also
  \cite{Tsaloukidis:2023jmr}).

\item[Carroll in crystals] Motivated by the relation between fractons
  and elasticity, let us provide a simple example of how Carroll and
  Carroll--Galilei symmetries emerge in the context of defects in
  crystals.

  Consider a point defect in a lattice, which we can envision by just
  removing one atom. At zero temperature this defect is stuck to a
  point, which is described by the action~\eqref{eq:monopole_act} and
  which has Carroll symmetry. At higher temperatures one would need
  to introduce a potential that governs the now statistically allowed
  movement.

  Let us now restrict to two spatial dimensions and consider the
  movement of a vacancy under a strain, as depicted in
  Figure~\ref{fig:glide}.

  \begin{figure}
    \centering
    \tikzset{
  ball/.pic={\filldraw[line width=0.5pt, fill=white] circle [radius=2pt];}
}

\hspace{-160pt}
\begin{tikzpicture}
\tikz{
 \draw[line width = 1pt] (0,0) .. controls(0,1) and (0.2,2) .. (0.2,3)
 	foreach \t in {0.11, 0.22, ..., 0.9} {pic [pos=\t] {ball}};
 \draw[line width = 1pt] (0.4,0) .. controls(0.4,1) and (0.6,2) .. (0.6,3)
	foreach \t in {0.11, 0.22, ..., 0.9} {pic [pos=\t] {ball}};
 \draw[line width = 1pt] (0.8,0) .. controls(0.8,1) and (1,2) .. (1,3)
	foreach \t in {0.11, 0.22, ..., 0.9} {pic [pos=\t] {ball}};
 \draw[line width = 1pt] (-0.2,1.5) .. controls(-0.2,2) and (-0.2,2.5) .. (-0.2,3)
	foreach \t in {0.1, 0.32, ..., 0.9} {pic [pos=\t] {ball}};
 \draw[line width = 1pt] (-0.4,0) .. controls(-0.4,1) and (-0.6,2) .. (-0.6,3)
	foreach \t in {0.11, 0.22, ..., 0.9} {pic [pos=\t] {ball}};
 \draw[line width = 1pt] (-0.8,0) .. controls(-0.8,0.5) and (-0.85,1) .. (-0.9,1.5)
	foreach \t in {0.22, 0.44, ..., 0.9} {pic [pos=\t] {ball}};
 \pic[draw=red] at (-0.2,1.33) {ball};
 \draw[line width = 1pt] (4,0) .. controls(4,1) and (4.2,2) .. (4.2,3)
 	foreach \t in {0.11, 0.22, ..., 0.9} {pic [pos=\t] {ball}};
 \draw[line width = 1pt] (4.4,0) .. controls(4.4,1) and (4.6,2) .. (4.6,3)
	foreach \t in {0.11, 0.22, ..., 0.9} {pic [pos=\t] {ball}};
 \draw[line width = 1pt] (3.8,1.5) .. controls(3.8,2) and (3.8,2.5) .. (3.8,3)
	foreach \t in {0.1, 0.32, ..., 0.9} {pic [pos=\t] {ball}};
 \draw[line width = 1pt] (3.6,0) .. controls(3.6,1) and (3.4,2) .. (3.4,3)
	foreach \t in {0.11, 0.22, ..., 0.9} {pic [pos=\t] {ball}};
 \draw[line width = 1pt] (2.8,0) .. controls(2.8,0.5) and (2.75,1) .. (2.7,1.5)
	foreach \t in {0.22, 0.44, ..., 0.9} {pic [pos=\t] {ball}};
 \draw[line width = 1pt] (3.2,0) .. controls(3.2,1) and (3,2) .. (3,3)
	foreach \t in {0.11, 0.22, ..., 0.9} {pic [pos=\t] {ball}};
 \pic[draw=red] at (3.8,1.33) {ball};
 \draw[line width = 1pt] (8,0) .. controls(8,1) and (8.2,2) .. (8.2,3)
 	foreach \t in {0.11, 0.22, ..., 0.9} {pic [pos=\t] {ball}};
 \draw[line width = 1pt] (7.8,1.5) .. controls(7.8,2) and (7.8,2.5) .. (7.8,3)
	foreach \t in {0.1, 0.32, ..., 0.9} {pic [pos=\t] {ball}};
 \draw[line width = 1pt] (7.6,0) .. controls(7.6,1) and (7.4,2) .. (7.4,3)
	foreach \t in {0.11, 0.22, ..., 0.9} {pic [pos=\t] {ball}};
 \draw[line width = 1pt] (6.4,0) .. controls(6.4,0.5) and (6.35,1) .. (6.3,1.5)
	foreach \t in {0.22, 0.44, ..., 0.9} {pic [pos=\t] {ball}};
 \draw[line width = 1pt] (7.2,0) .. controls(7.2,1) and (7,2) .. (7,3)
	foreach \t in {0.11, 0.22, ..., 0.9} {pic [pos=\t] {ball}};
 \draw[line width = 1pt] (6.8,0) .. controls(6.8,1) and (6.6,2) .. (6.6,3)
	foreach \t in {0.11, 0.22, ..., 0.9} {pic [pos=\t] {ball}};
 \pic[draw=red] at (7.8,1.33) {ball};
 \draw[line width = 1pt] (11.4,0) .. controls(11.4,1) and (11.4,2) .. (11.4,3)
 	foreach \t in {0.11, 0.22, ..., 0.9} {pic [pos=\t] {ball}};
 \draw[line width = 1pt] (11.8,1.5) .. controls(11.8,2) and (11.8,2.5) .. (11.8,3)
	foreach \t in {0.1, 0.32, ..., 0.9} {pic [pos=\t] {ball}};
 \draw[line width = 1pt] (11,0) .. controls(11,1) and (11,2) .. (11,3)
	foreach \t in {0.11, 0.22, ..., 0.9} {pic [pos=\t] {ball}};
 \draw[line width = 1pt] (9.8,0) .. controls(9.8,0.5) and (9.8,1) .. (9.8,1.5)
	foreach \t in {0.22, 0.44, ..., 0.9} {pic [pos=\t] {ball}};
 \draw[line width = 1pt] (10.2,0) .. controls(10.2,1) and (10.2,2) .. (10.2,3)
	foreach \t in {0.11, 0.22, ..., 0.9} {pic [pos=\t] {ball}};
 \draw[line width = 1pt] (10.6,0) .. controls(10.6,1) and (10.6,2) .. (10.6,3)
	foreach \t in {0.11, 0.22, ..., 0.9} {pic [pos=\t] {ball}};
 \pic[draw=red] at (11.8,1.33) {ball};
 \draw[line width = 0.6pt, dashed] (-1.2,1.5) -- (12.2,1.5);
 \draw[line width = 0.8pt, -Stealth] (-0.2,3.5) -- (0.4,3.5);
 \draw[line width = 0.8pt, -Stealth] (0.4,-0.5) -- (-0.2,-0.5);
 \draw[line width = 0.8pt, -Stealth] (3.4,3.5) -- (4,3.5);
 \draw[line width = 0.8pt, -Stealth] (4,-0.5) -- (3.4,-0.5);
 \draw[line width = 0.8pt, -Stealth] (7,3.5) -- (7.6,3.5);
 \draw[line width = 0.8pt, -Stealth] (7.6,-0.5) -- (7,-0.5);
 \draw[line width = 0.8pt, -Stealth] (10.6,3.5) -- (11.2,3.5);
 \draw[line width = 0.8pt, -Stealth] (11.2,-0.5) -- (10.6,-0.5);
 \draw[line width = 1pt, -Stealth, draw=red] (-2.2,1.3) -- (-1.4,1.3);
 \node at (-1.8,0.8) {{\color{red} $\vec{\bm{b}}$}};
 \node at (-1.5,3.5) {shear stress};
 \draw[-] (-1.2,1.5) -- node[above=2mm] {slip plane} (-2,2);
}
\end{tikzpicture}
    \caption{The glide motion of a dislocation (red circle) is
      parallel to the Burgers vector $\vec{\bm{b}}$.}
    \label{fig:glide}
  \end{figure}
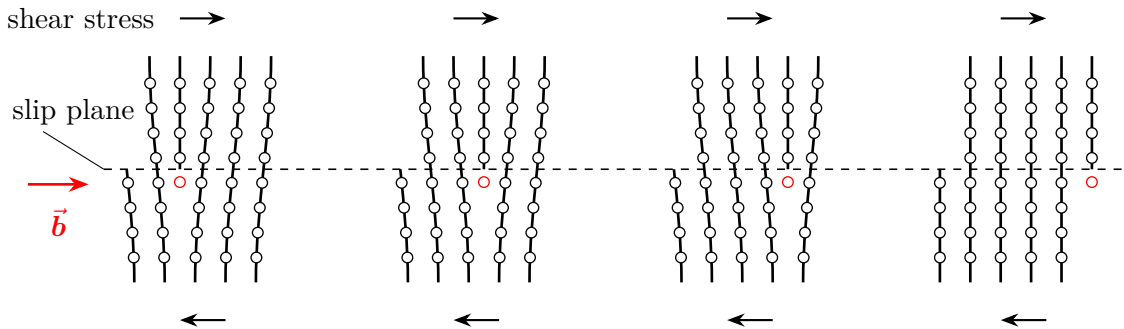
  The movement of the vacancy is restricted to be parallel to the
  Burgers vector. Once we identify the Burgers vector with the dipole
  vector via $b_{i}=\epsilon_{ij}d_{i}$ the movement of the vacancy
  follows~\eqref{eq:Spl}, i.e., is restricted to be orthogonal to the
  dipole moment. We have restricted to low temperature and (infinite)
  planar glide motion. For more information see,
  e.g.,~\cite{hull2011introduction}.
  
  In this sense defects in crystals provide the possibly simplest
  physically realised model with Carroll(--Galilei) symmetry. There
  could be a further interesting connections between nonlorentzian
  geometry and condensed matter physics that await exploration.
\end{description}

\section*{Acknowledgments}

The work of SiP was partially supported by the Fonds de la Recherche
Scientifique -- FNRS under Grant No. FC.36447, the SofinaBo\"el Fund
for Education and Talent and the \textit{Fonds Friedmann} run by the
\textit{Fondation de l'\'Ecole polytechnique}. Since 1 November 2024,
SiP is supported by European Research Council (ERC) Project 101076737
-- \textit{CeleBH}.

This work was initiated while StP was supported by the
Leverhulme Trust Research Project Grant (RPG-2019-218) “What is
Non-Relativistic Quantum Gravity and is it Holographic?”.

The research of AP is partially supported by Fondecyt grants No
1220910 and 1230853.

JMF would like to thank Jarah Fluxman for interesting conversations
about the Mackey method.

AP and StP acknowledge support from the Erwin Schrödinger
International Institute for Mathematics and Physics (ESI) where some
of the research was undertaken during the ESI Research in Teams
Programme.

\appendix

\section{Summary of \cite{Figueroa-OFarrill:2024ocf} in planon  language}
\label{sec:bargmann-to-planon-summary}

In this appendix we summarise the results of
\cite{Figueroa-OFarrill:2024ocf} on the Bargmann group in the planon
language.  This is mostly an exercise in translating notation, but we
think it would be useful to record the translation here.

The basic dictionary is the one relating the bases of the Lie
algebras, which as mentioned already translates the Bargmann basis
$\left<L_{ab},B_a, P_a, H, M\right>$ to
$\left<L_{ab},P_a, D_a, Z, Q\right>$ and is summarised in
Table~\ref{tab:planon-barg}.  The Bargmann canonical dual basis
$\left<\lambda^{ab}, \beta^a, \pi^a, \eta, \mu\right>$ translates to
$\left<\lambda^{ab}, \pi^a, \delta^a, \zeta,\theta\right>$ and hence
the Bargmann moment\footnote{In \cite{Figueroa-OFarrill:2024ocf} we
  chose the sign in front of $p_a \pi^a$ in order that under the boost
  with parameter $\bv$, the momentum $\p$ would change by
  $\p + m \bv$.  We have decided to keep the minus sign in the planon
  case for no other reason than to simplify the translation.}
\begin{equation}
  \label{eq:barg-moment}
  \tfrac12 J_{ab} \lambda^{ab}  + k_a \beta^a - p_a \pi^a + E \eta + m \mu
\end{equation}
translates to
\begin{equation}
  \tfrac12 J_{ab} \lambda^{ab}  + p_a \pi^a - d_a \delta^a + \tilde q
  \zeta + q \theta.
\end{equation}

\begin{table}[h!]
    \centering
    \begin{tabular}{l|l}
        Planon & Bargmann $\oplus$ $\mathbb R$ \\
        \hline
        $L_{ab}$: spatial rotations & $L_{ab}$: spatial rotations \\
        $P_a$: spatial translations & $B_a$: galilean boosts \\
        $D_a$: dipole moment & $P_a$: spatial translations \\
        $Z$: trace quadrupole moment & $H$: time translation \\
        $Q$: electric charge & $M$: mass \\
        $H$: time translation & $\mathbb R$
    \end{tabular}
    \caption{Correspondence between planon and Bargmann algebra generators.}
    \label{tab:planon-barg}
\end{table}

The Bargmann group parameters $(a_+,a_-,\ba, \bv, R)$ in
\cite{Figueroa-OFarrill:2024ocf} now become $(\varphi,\tilde\varphi,
\bbeta, \ba, R)$ and the Bargmann moments $(m,E, \p, \bk,
J)$ in \cite{Figueroa-OFarrill:2024ocf} now become $(q,\tilde q,
\bd,\ba, R)$.

The generic Bargmann group element in \cite{Figueroa-OFarrill:2024ocf}
is therefore now $\sfb(\varphi,\tilde\varphi,\bbeta,
\boldsymbol{a}, R)$.  Its coadjoint action on the moment $\sM(J, 
\tilde q, q, \p, \bd)$ is then given by
\begin{equation}
  \Ad^*_{\sfb(\varphi,\tilde\varphi,\bbeta,\boldsymbol{a},R)}
  \sM(q, \tilde q,\bd, \p, J) = \sM(q', \tilde q',\bd', \p', J'),
\end{equation}
where
\begin{equation}
  \label{eq:barg-coadjoint}
  \begin{split}
    J' &= RJR^T + \bbeta (R \bd + q \boldsymbol{a})^T - (R \bd + q  \boldsymbol{a}) \bbeta^T + (R \p) \ba^T - \ba (R  \p)^T\\
    \p' &= R\p + q \bbeta + \tilde\varphi (R  \bd + q \ba)\\
    \bd' &= R \bd + q \ba\\
    \tilde q' &= \tilde q + R \bd \cdot \ba + \tfrac12 q  \|\bd\|^2\\
    q' &= q.
  \end{split}
\end{equation}
In the particular case of the spatial dimension $n=3$, the moment
$\sM(q, \tilde q,\bd, \p, \bj)$ transforms into
\begin{equation}
  \label{eq:barg-coadjoint-3d}
  \begin{split}
    \bj' &= R\bj - \bbeta \times (R \bd + q \boldsymbol{a}) + \ba \times R \p\\
    \p' &= R\p + q \bbeta + \tilde\varphi (R  \bd + q \ba)\\
    \bd' &= R \bd + q \ba\\
    \tilde q' &= \tilde q + R \bd \cdot \ba + \tfrac12 q  \|\bd\|^2\\
    q' &= q.
  \end{split}
\end{equation}
Table~1 in \cite{Figueroa-OFarrill:2024ocf} translates into Table~\ref{tab:barg-coad-orbs-planon}.

\begin{table}[h]
  \centering
    \caption{Coadjoint orbits of the Bargmann group $B$ in planon language}
    \label{tab:barg-coad-orbs-planon}
    \setlength{\extrarowheight}{3pt}
  \resizebox{\linewidth}{!}{
    \begin{tabular}{*{3}{>{$}l<{$}}>{$}c<{$}>{$}l<{$}}
      \multicolumn{1}{l}{\#} & \multicolumn{1}{c}{Orbit representative} & \multicolumn{1}{c}{Stabiliser} & \dim\mathcal{O}_\alpha & \multicolumn{1}{c}{Equations for orbits}\\
                             &
                               \multicolumn{1}{c}{$\alpha=\sM(q,\tilde  q, \bd, \p,\bj)$} & \multicolumn{1}{c}{$B_\alpha$}& \\ \midrule \rowcolor{blue!7}
      1& \sM(q_0,\tilde q_0,\bzero,\bzero,\bzero) & \{\sfb(\varphi,\tilde\varphi,\bzero,\bzero,R)\} & 6 &q=q_0\neq 0, \tfrac1{2 q}(\|\bd\|^2- 2 q \tilde q) = \tilde q_0, q\bj= \bd\times\p\\
      2& \sM(q_0,\tilde q_0,\bzero,\bzero, \ell \be_3) & \{\sfb(\varphi,\tilde\varphi,\bzero,\bzero,R) \mid R\be_3 = \be_3\} & 8 & q=q_0\neq 0, \tfrac1{2 q}(\|\bd\|^2- 2 q\tilde q) = \tilde q_0, \|q\bj - \bd \times\p \| = \ell >0\\\rowcolor{blue!7}
      3& \sM(0,\tilde q_0,\bzero,\bzero,\bzero) & B & 0 & q=0, \tilde q = \tilde q_0, \bd= \p = \bj = \bzero \\
      4& \sM(0,\tilde q_0,\bzero,\bzero,\ell\be_3) & \{\sfb(\varphi,\tilde\varphi,\bbeta,\ba,R) \mid R\be_3 = \be_3\} & 2 & q=0, \tilde q = \tilde q_0, \bd = \p = \bzero, \|\bj\|=\ell>0 \\\rowcolor{blue!7}
      5& \sM(0,\tilde q_0,\bzero, p\be_3, h\be_3) & \{\sfb(\varphi,\tilde\varphi,\bbeta,a \be_3,R) \mid R\be_3 = \be_3\} & 4 & q=0, \tilde q= \tilde q_0, \bd = \bzero, \|\p\|= p>0, \bj \cdot \p = h p\\
      6& \sM(0,0,d \be_3,\bzero,\bzero) & \{\sfb(\varphi,0,\beta \be_3,\ba,R) \mid R\be_3 = \be_3, \ba \cdot \be_3 = 0 \} & 6 & q=0, \|\bd\|=d>0, \bd \times \p = \bzero\\\rowcolor{blue!7}
      7& \sM(0,0,d\be_3, p\be_2,\bzero) & \{\sfb(\varphi,0,\beta \be_3, a \be_2,I)\} & 8 & q = 0, \|\bd\| = d >0, \|\bd \times \p\| = dp >0\\
      \bottomrule
    \end{tabular}
  }
  \caption*{This table lists the different coadjoint orbits of the
    Bargmann group in planon language.  In each case we exhibit an orbit representative
    $\alpha\in \b^*$, its stabiliser subgroup $B_\alpha$ inside the
    Bargmann group, the dimension $\dim \eO_\alpha$ of the orbit and
    the equations defining the orbit.}
\end{table}

The pull-back to the space of parameters of the Maurer--Cartan
one-form on the Bargmann group (equation (4.1) in
\cite{Figueroa-OFarrill:2024ocf}) is now given by
\begin{equation}
  \label{eq:MC-one-form-Bargmann}
  b^{-1}db = \sA(d\varphi - \ba^T d\bbeta-\tfrac12
  \|\ba\|^2d\tilde\varphi, d\tilde\varphi, R^T(d\bbeta +
  \ba d\tilde\varphi), R^T d\ba, R^TdR)
\end{equation}
where $b = \sfb(\varphi,\tilde\varphi, \bbeta,\ba,R)$, and
its contraction with the moment $\sM(q,\tilde q, \bd,\p,J)$
(equation (4.2) in \cite{Figueroa-OFarrill:2024ocf}) is now given by
\begin{multline}
  \label{eq:MC-one-form-contracted-Bargmann}
  \left<\sM(q,\tilde q, \bd,\p,J),b^{-1}db\right> =
  qd\varphi - (\tilde q + \tfrac12 q \|\ba\|^2 +
  (R\bbeta)^T\ba)d\tilde\varphi\\ - (R \bd + q
  \ba)^T d\bbeta + (R\p)^T d\ba + \tfrac12 \Tr J^T R^T d R.
\end{multline}

In the special case of $n=3$ dimensions, and using the isomorphism
$\varepsilon: \RR^3 \to \so(3)$ defined by
$\varepsilon(\ba)\boldsymbol{b} = \ba \times \boldsymbol{b}$, which
allows us to write $J = \varepsilon(\bj)$, we have that
\begin{multline}
  \label{eq:M-C-form-contracted-3d}
  \left<\sM(q,\tilde q, \bd,\p,\bj),b^{-1}db\right> =
  qd\varphi - (\tilde q + \tfrac12 q \|\ba\|^2 +
  R\bbeta\cdot\ba)d\tilde\varphi\\ - (R \bd + q
  \ba)\cdot d\bbeta + (R\p)\cdot d\ba + \bj \cdot
  \varepsilon^{-1}(R^{-1}dR),
\end{multline}
which is simply the translation of equation (5.2) in
\cite{Figueroa-OFarrill:2024ocf}.

\section{Central extensions}
\label{app:extensions}

In this Appendix we collect a result necessary in
Section~\ref{sec:coadjoint-orbits-n=3} for the classification of
coadjoint orbits of the centrally extended planon group.

Let $\g$ be a finite-dimensional (for definiteness) real Lie algebra and
let $H^2(\g)$ denote the second Chevalley--Eilenberg cohomology with
values in the trivial representation.  This is a finite-dimensional
real vector space with basis $[\omega_i]$, for $i=1,\dots,\dim
H^2(\g)$, where $[\omega_i]$ is the cohomology class of the
$2$-cocycle $\omega_i \in \wedge^2 \g^*$.  This allows us to define a
central extension
\begin{equation}
  \begin{tikzcd}
    0 \ar[r] & H^2(\g) \ar [r] & \widehat \g \ar[r] & \g \ar[r] & 0,
  \end{tikzcd}
\end{equation}
whose brackets are given explicitly by
\begin{equation}
  [X , Y]_{\widehat\g} = [X,Y]_\g + \sum_{i=1}^{\dim H^2(\g)}
  \omega_i(X,Y) Z_i \qquad\text{and}\qquad [Z_i,-]_{\widehat\g} = 0,
\end{equation}
for all $X,Y \in \g$.  (We identify $\g$ with a subspace of
$\widehat\g$, so in effect $\widehat\g  = \g \oplus \bigoplus_i \RR
Z_i$, where the above brackets live.)

Now consider a one-dimensional central extension $\widetilde\g$:
\begin{equation}
  \begin{tikzcd}
    0 \ar[r] & \RR Z \ar [r] & \widetilde \g \ar[r] & \g \ar[r] & 0,
  \end{tikzcd}
\end{equation}
with brackets
\begin{equation}
  [X , Y]_{\widetilde\g} = [X,Y]_\g + \omega(X,Y) Z \qquad\text{and}\qquad [Z,-]_{\widetilde\g} = 0,
\end{equation}
for all $X,Y \in \g$ and where $\omega \in \wedge^2 \g^*$ is the
corresponding $2$-cocycle.  In cohomology, $[\omega] = \sum_i c_i
[\omega_i]$, since the $[\omega_i]$ span $H^2(\g)$.  We can therefore
choose $\omega$, perhaps by modifying it by a coboundary, so that
$\omega = \sum c_i \omega_i$ as cocycles.

The claim is that there exists a Lie algebra surjective homomorphism
$\varphi: \widehat\g \to \widetilde\g$ which is the identity on $\g$;
explicitly:
\begin{equation}
  \varphi(X) = X \qquad\text{and}\qquad \varphi(Z_i) = c_i Z.
\end{equation}
It follows that
\begin{align*}
  [\varphi(X),\varphi(Y)]_{\widetilde \g} &= [X,Y]_{\widetilde\g} \\
                                          &= [X,Y]_\g + \omega(X,Y) Z\\
                                          &= [X,Y] _\g + \sum_i c_i \omega_i(X,Y) Z\\
                                          &= \varphi\left([X,Y]_\g + \sum_i \omega_i(X,Y) Z_i\right)\\
                                          &= \varphi\left( [X,Y]_{\widehat\g} \right),
\end{align*}
and hence, since $Z_i$ and $Z$ are central in their respective Lie
algebras, that $\varphi$ is a Lie algebra homomorphism, which is
moreover manifestly surjective.  The kernel of $\varphi$ consists of
those $\sum_i a_i Z_i$ such that $\sum_i a_i c_i = 0$ and hence spans
a hyperplane in $H^2(\g)$.  Conversely, every hyperplane in $H^2(\g)$
defines a one-dimensional central extension of $\g$.

\section{The extended planon group is a regular semidirect product}
\label{sec:regul-centr-extend}

The method of Mackey for constructing UIRs of a semidirect product $B
\ltimes A$, for $A$ abelian (or more generally $A$ nilpotent) is
guaranteed to produce all UIRs whenever the semidirect product is
\emph{regular}.  Let $G = B \ltimes A$ be a semidirect product and let
$\widehat A$ denote the unitary dual: the space of UIRs of $A$.  Then
the semidirect product $G = B \ltimes A$ is said to be
\textbf{regular} if there exist a countable family $\{\eB_i\}$ of
$G$-invariant Borel subsets $\eB_i \subset \widehat A$, so that every
$G$-orbit $\eO \subset \widehat A$ is the intersection $\bigcap_i
\eB_{n_i}$ of some subfamily of the $\{\eB_i\}$.

In this appendix we verify that the description of the centrally
extended planon group $G$ as a semidirect product $B \ltimes A$, with
$B$ the Bargmann group and $A$ a two-dimensional abelian group, is
regular.  Since $A$ is abelian, $\widehat A$ can be identified with
the dual of the Lie algebra $\a$ of $A$, since to every
$\alpha \in \a^*$ we may associate the unitary one-dimensional
representation with character $\chi(\exp(X)) = e^{i\alpha(X)}$, for
$X \in A$.

In the case of interest, $G$ is the centrally extended planon
group, $B$ is the Bargmann group and $A$ the two-dimensional abelian
group with Lie algebra spanned by $H,W$.  The unitary dual $\widehat A$ is
the dual of the Lie algebra and this is a copy of the plane with
coordinates $(E,c)$.  As we saw in Section~\ref{sec:quant-plan-part},
the action of the planon group on $\widehat A$, sends $(E,c) \mapsto (E +
c \tilde\varphi, c)$ and hence there are two classes of orbits:
point-like orbits $\eO_{(E,0)} = \{(E,0)\}$ and one-dimensional orbits
$\eO_{(0,c)} = \RR \times \{c\}$ for every $c \neq 0$.

Consider the following countable family of Borel subsets of the plane:
\begin{equation}
  \eB_E(r,s) := \bigcup_{r<E<s} \eO_{(E,0)} \qquad\text{and}\qquad
  \eB_c(r,s) := \bigcup_{r<c<s} \eO_{(0,c)},
\end{equation}
where $r,s \in \QQ$ are rational numbers.  The subset $\eB_E(r,s)$
consists of those points $(E,0)$ with $r < E < s$ and hence it is an
open interval in the $c=0$ axis.  The subset $\eB_c(r,s)$ consists of
those points $(E,c)$ with $r < c < s$ and $E$ arbitrary and hence it
is an open (infinite) vertical strip.

It is clear that $\eO_{(E,0)}$ lies in every $\eB_E(r,s)$ and hence it
lies in their intersection $\bigcap_{r<E<s} \eB_E(r,s)$.  Furthermore,
if $E-E'\neq 0$, there will be rational
numbers $r,s$ with $r < E < s$ such that $E'\not\in(r,s)$ and hence 
$\eO_{(E',0)} \not\subset \bigcap_{r<E<s} \eB_E(r,s)$.  Therefore,
$\bigcap_{r<E<s} \eB_E(r,s) = \eO_{(E,0)}$.

The argument for $\eO_{(0,c)}$ is identical.  The orbit $\eO_{(0,c)}$
belongs to every $\eB_c(r,s)$ and hence it also belongs to the
intersection $\bigcap_{r<E<s} \eB_c(r,s)$.  Moreover if $c' \neq c$,
there will be some rational numbers $r,s$ with $r < c < s$ such that
$c' \not\in (r,s)$ and hence $\eO_{(0,c')} \not\subset \bigcap_{r<E<s}
\eB_c(r,s)$.  Therefore $\bigcap_{r<c<s} \eB_c(r,s) = \eO_{(0,c)}$.

In summary, every orbit lies in the intersection of a subfamily of the
countable family
\begin{equation}
  \left\{ \eB_E(r,s)  ~\middle |~ r < E < s~\text{with}~r,s \in
    \QQ\right\}\cup \left\{ \eB_c(r,s)  ~\middle |~ r < c < s~\text{with}~r,s \in
    \QQ\right\}
\end{equation}
of Borel sets of $\a^*$, and hence the semidirect product $B \ltimes
A$ is regular.

\bibliographystyle{utphys}
\bibliography{bibl}

\end{document}